




\documentclass[sn-basic]{sn-jnl}

\usepackage{natbib}
\usepackage{amssymb, amsmath}
\usepackage{mathtools} 

\DeclareMathOperator*{\argmax}{argmax}
\DeclareMathOperator*{\argmin}{argmin}
\DeclareMathOperator*{\supp}{supp}
\DeclareMathOperator*{\pers}{pers}
\DeclareMathOperator*{\length}{length}

\renewcommand{\R}{\mathbb{R}}
\renewcommand{\Z}{\mathbb{Z}}
\newcommand{\N}{\mathbb{N}}
\newcommand{\Se}{\mathbb{S}}
\newcommand{\bottleneck}{d_B}
\newcommand{\indicator}{1}
\newcommand{\diff}{\mathrm{d}}
\newcommand{\geometric}[1]{\overline{#1}}
\newcommand{\restr}[2]{{
		\left.\kern-\nulldelimiterspace 
		#1 
		\right\vert_{#2} 
}}
\newcommand{\freq}{\omega}
\newcommand{\Partition}{\mathcal{A}}
\newcommand{\partition}{A}

\jyear{2022}%

\theoremstyle{thmstyleone}%
\newtheorem{theorem}{Theorem}
\newtheorem{proposition}[theorem]{Proposition}%
\newtheorem{lemma}[theorem]{Lemma}

\theoremstyle{thmstyletwo}%
\newtheorem{remark}{Remark}%

\theoremstyle{thmstylethree}%
\newtheorem{problem}{Problem}%
\newtheorem{definition}{Definition}%

\raggedbottom

\begin{document}

\title[Topological phase estimation]{Topological phase estimation method for reparameterized periodic functions}

\author[1,2]{\fnm{Thomas} \sur{Bonis}}\email{thomas.bonis@univ-eiffel.fr}
\author[3]{\fnm{Fr\'ed\'eric} \sur{Chazal}}\email{frederic.chazal@inria.fr}
\author[4]{\fnm{Bertrand} \sur{Michel}}\email{bertrand.michel@ec-nantes.fr}
\author*[3]{\fnm{Wojciech} \sur{Reise}}\email{wojciech.reise@inria.fr}

\affil[1]{\orgdiv{LAMA}, \orgname{Universit\'e Gustave Eiffel}, \orgaddress{\street{5 boulevard Descartes}, \city{Champs--sur--Marne}, \postcode{77420},
		\country{France}}}
\affil[2]{\orgname{Sysnav}, \orgaddress{\street{72 Rue Emile Loubet}, \city{Vernon}, \postcode{27200}, \country{France}}}
\affil[3]{\orgname{Inria Saclay}, \orgaddress{\street{1 rue Honor\'e d'Estienne d'Orves}, \city{Palaiseau}, \postcode{91120}, \country{France}}}
\affil[4]{\orgname{\'Ecole Centrale de Nantes}, \orgaddress{\street{1 Rue de la No\"e}, \city{Nantes}, \postcode{44300}, \country{France}}}

\abstract{
	We consider a signal composed of several periods of a periodic function, of which we observe a noisy reparametrisation. The phase estimation problem consists of finding that reparametrisation, and, in particular, the number of observed periods. Existing methods are well--suited to the setting where the periodic function is known, or at least, simple. We consider the case when it is unknown and we propose an estimation method based on the shape of the signal. We use the persistent homology of sublevel sets of the signal to capture the temporal structure of its local extrema. We infer the number of periods in the signal by counting points in the persistence diagram and their multiplicities. Using the estimated number of periods, we construct an estimator of the reparametrisation. It is based on counting the number of sufficiently prominent local minima in the signal. This work is motivated by a vehicle positioning problem, on which we evaluated the proposed method.
}

\keywords{Phase estimation, reparametrization, shape analysis, persistent homology, magnetic odometry}


\pacs[MSC Classification]{
	57M99,
	60G35,
	62R40,
	94A12.
}

\maketitle

\section{Introduction}
\label{sec:intro}
In this work, we study an inference problem on a re-parametrisation of a function. Specifically, consider
\begin{equation}
\label{eq:noise_model}
\begin{aligned}
S: [0,1]&\rightarrow\R \\
t &\mapsto (f\circ \gamma)(t) + W_t,
\end{aligned}
\end{equation}
where $(W_t)_{t\in [0,T]}$ is a stochastic process, $f$ is a one--periodic, continuous function and $\gamma:[0,1]\rightarrow [0,N]$ is an increasing bijection for some $N\in\N^*$.
\begin{figure}
	\centering
	\includegraphics[width=0.3\textwidth]{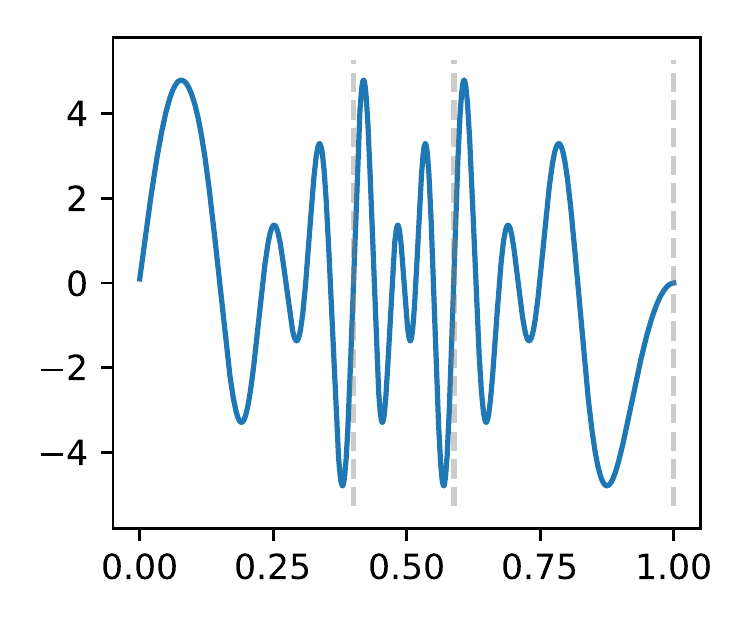}
	\includegraphics[width=0.3\textwidth]{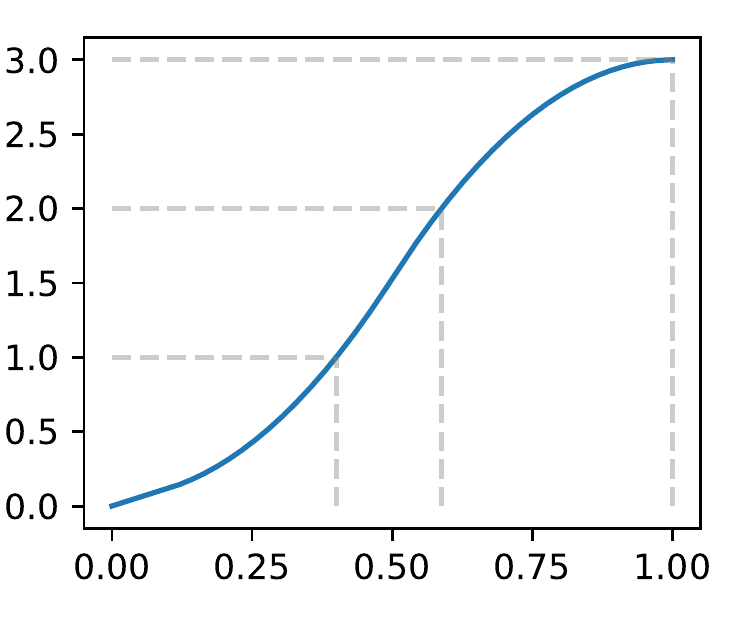}
	\caption{On the left, an example of a noise--free observed function $f\circ\gamma$, where a one--periodic $f$ has been composed with $\gamma$ depicted on the right. On both graphs, $t$ such that $\gamma(t)\in \N$ are marked by the gray dashed lines.}
	\label{fig:model_illustration}
\end{figure}
We consider the setting where $f$, $\gamma$ and $N$ are unknown. Figure~\ref{fig:model_illustration} illustrates the model with an example of $f$ featuring eight local extrema per period, a non--linear $\gamma$ and $W_t=0$. We are interested in two inference problems on $\gamma$, motivated by industrial applications to positioning systems.
\begin{problem}
	\label{prob:estimate_N}
	Given $S$, estimate $N$.
\end{problem}
\begin{problem}[Odometry]
	\label{prob:odometry}
	Given $S$ and $N$, find $(t_i)_{i=1}^N$, such that
	\begin{equation*}
	\label{eq:odometry_condition}
	\gamma(t_i)- \gamma(t_{i-1})=1, \qquad \forall\ i=2,\ldots, N.
	\end{equation*}
	Such a sequence will be called an odometric sequence.
\end{problem}
Problem~\ref{prob:estimate_N} consists of estimating how many periods of $f$ are present in an observed signal $S$. Problem~\ref{prob:odometry} is a segmentation task, where the number of segments is given. This segmentation, called an odometric sequence, thus contains finer information: if we know that $(t_i)_{i=1}^N$ is an odometric sequence for $S$, then $\restr{S}{[t_1, t_N]}$ contains $N-1$ periods. We therefore include the number of periods $N$ as part of the input data to Problem~\ref{prob:odometry}.

\subsection{Related literature}
\label{sec:lit_review}
Inverse problems similar to Problems~\ref{prob:estimate_N} and~\ref{prob:odometry} have been studied in different contexts.
In signal processing, the instantaneous phase estimation~\citep[Chapter 10]{boashash_time-frequency_2015} is concerned with recovering $t\mapsto\gamma(t)$ when $f$ is known to belong to a family of functions of simple form, for example $f(x) = a\sin(2\pi(x-\phi))$. Such models were motivated by applications in communications, where the carrier and signal waves have a single sinusoidal component. The work has led to two types of methods. Spectral methods consist of using a frequency representation of $S$ to recover the phase: the exact solution is given by $\gamma(t) = \arctan(H(S)(t)/S(t))$, where $H$ is the Hilbert transform~\citep{boashash_algorithms_1990}. For more complex signals, one can analyze time frequency representations of the signal, using image processing techniques like peak detection and component linking~\citep{rankine_if_2007}. This procedure is even extended to sums of independently-parametrized functions~\citep{khan_multi-component_2016, hussain_adaptive_2002}. 
These methods are aimed at estimating $\gamma'$ and it is shown possible under hypothesis of separation of different components in the frequency spectrum.
Another type of method is concerned with geometric properties of the signal. The simplest example, called the \emph{zero crossings} method relies on counting the number of times that the graph of $S$ crosses zero~\citep{boashash_algorithms_1990}. This principle is generalized to other forms via intrinsic mode functions~\citep{huang_empirical_1998}. With the empirical mode decomposition algorithm~\citep{huang_empirical_1998}, $f$ can be decomposed into more elementary functions, connecting local maxima and minima using cubic splines. This algorithm decomposes the signal into multiple, independent, simple components.
On one hand, the setting considered in our work is easier than that for the spectral methods, because~\eqref{eq:noise_model} implies two properties of the signal. First, it has constant amplitude. Second, if $f$ is decomposed into multiple components, those components share a reparametrisation. On the other hand, nothing other than that is known about $f$ and the observations are corrupted by additive noise $W$.

Topological methods for signal processing become increasingly widespread~\citep{robinson_topological_2014}.
A widely known application to the study of time series is detecting the recurrent structure in dynamical systems~\citep{perea_persistent_2016}. This is achieved through the lens of the homology of the time delay embedding of the time series.
Beyond topology, the understanding of the shape of that embedding extends to geometric and spectral decomposition properties~\citep{gakhar_kunneth_2019}. In particular, the geometric and topological features depend not only on $f$, but also on the frequencies in the signal.
The Hilbert transform embedding method has been proposed in~\cite{kennedy_novel_2018}. It is a different embedding method inspired by the analytic signal transformation found in the signal processing literature. It is shown to maximize the topological signal of a reparametrized, sinusoidal function.
In our context, the time delay embedding technique is not adapted, because the frequencies of the signal vary in a range we do not control. The signature of a path~\citep{hambly_uniqueness_2010} is a bounded sequence, which is invariant to any reparametrisation. However, we are not aware of any relation between the signature of a single period and that of multiple periods of a function. 

An alternative line of work at the intersection of signal processing and topological methods consists of studying a one dimensional signal directly. Similarly to the zero crossings or empirical mode decomposition from signal processing, it consists of looking at the shape of the signal. More precisely, it is based on the homology of the sublevel sets of the signal. This approach led to a denoising method~\citep{plonka_relation_2016} and also to confidence intervals on a topological descriptor~\citep{myers_separating_2020}. In~\cite{khasawneh_topological_2018}, the authors propose a period counting method for binary functions $f$. Similar to the zero crossings, the method counts the number of changes in a binary signal, but discards those which last for only a short period of time. As shown experimentally by the authors, this makes it more robust than the spectral methods. That method solves both Problems~\ref{prob:estimate_N} and~\ref{prob:odometry}, but in the simple case where $f$ is a binary signal.

As an example, Figure~\ref{fig:introduction_f} shows three functions $f$. The two on the left come from the model for the zero crossings method and the topological method from \cite{khasawneh_topological_2018} respectively. Neither of the two methods was designed for continuous signals with many local extrema per period, as presented on the right, which we attempt to study.

\begin{figure}
	\centering
	\includegraphics[width=0.7\textwidth]{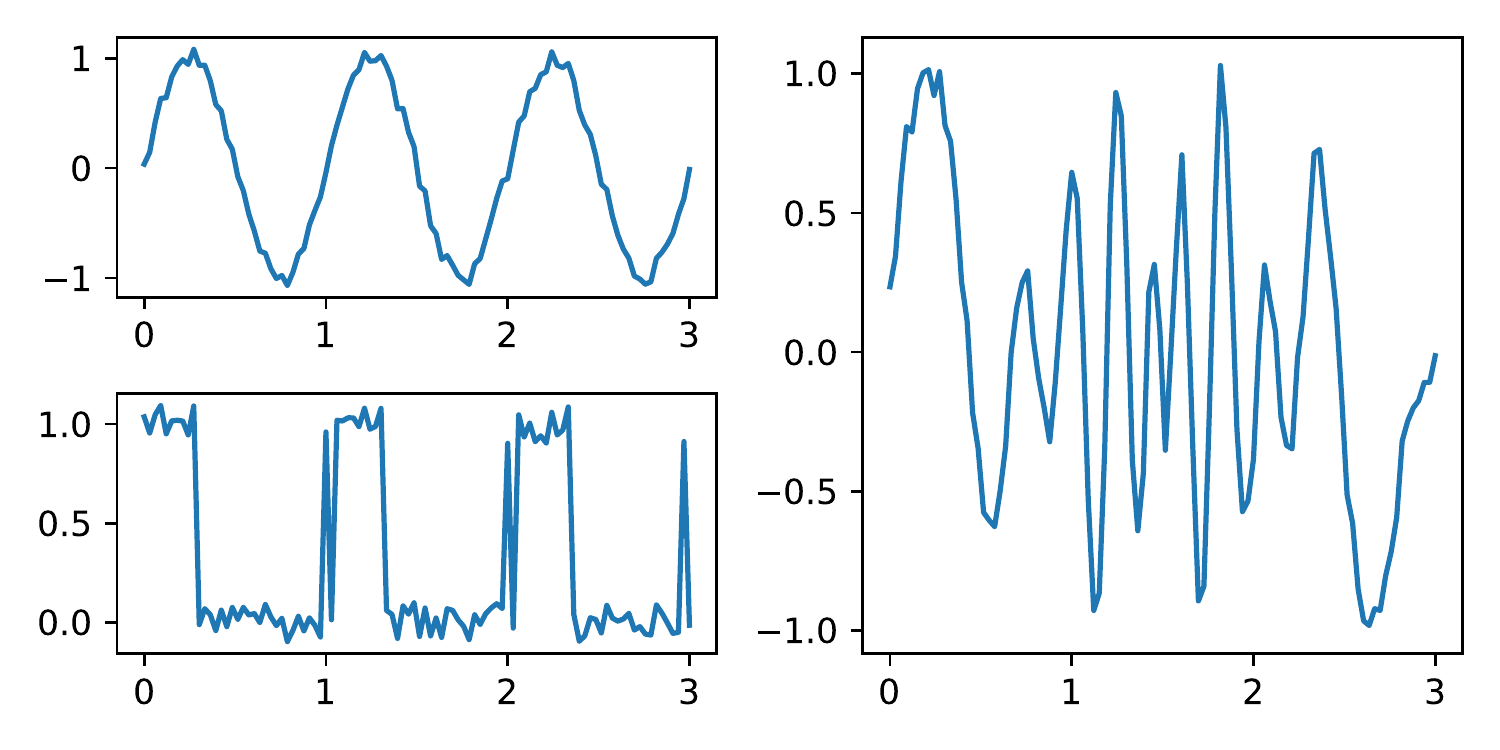}
	\caption{
		On the left, the graphs of signals from models which the zero crossings and the topological method~\cite{khasawneh_topological_2018} respectively, were designed for. On top, a sinusoidal function with one pair of extrema per period. On the bottom, a noisy binary signal.
		On the right, a signal following the model \eqref{eq:noise_model} studied in this work.}
	\label{fig:introduction_f}
\end{figure}

\subsection{Practical motivation}
This work is motivated by a positioning problem. Given a collection of samples from $S$ measuring the magnetic field inside a vehicle at position $\gamma(t)$, the goal is to retrieve information about the distance, $\gamma$, traveled by that vehicle.
Magnetic field measurements have been used to robustify and enhance the precision of position estimates~\citep{cruz_magland_2020, wei_vehicle_2017}. However, in these approaches, the magnetic fingerprint of the environment is either assumed known or constructed first. Then, at inference time, the observed magnetic signal is compared with the gathered ``reference signals".

The methodology proposed in this work relies on a different principle, which we credit to preliminary experiments conducted in~\cite{bristeau_techniques_2012}. When a magnetic sensor is placed near a wheel of the vehicle moving on a straight line, the rotating wheel generates perturbations in the measurements of the magnetic field. If we assume that Earth's magnetic field is constant, these perturbations are periodic in the rotation of the wheel. Solving Problem~\ref{prob:odometry} is therefore akin to counting number of revolutions of the wheel, up to a multiplicative constant, and it gives us an approximation of $\gamma$. Given the odometric sequence $(t_n)_{n=1}^N$, we define $\hat{\gamma}$ as an estimator of $\gamma$
\begin{equation*}
\hat{\gamma}(t) = \sum_{n=1}^N \indicator_{t_n\leq t}.
\end{equation*}
The dependency of the signal on the rotation of the wheel is not understood via physical modeling of the magnetic field, so little is known about $f$. In particular, car wheels often feature symmetric structures, so it is possible that $f$ has many extrema per period. Little is understood about other sources of magnetic perturbations and its nature can vary. Electronic devices in the car such as blinkers or screen wipers, produce their own magnetic field, which is also periodic (in time, not in the distance traveled by the vehicle). Infrastructure like high tension lines or metal structures are another source of magnetic noise. This justifies the introduction of model~\eqref{eq:noise_model} with additive noise, which is smaller (or of the order of) than the amplitude of the periodic functions' oscillations.

\subsection{Contributions}
In this work, we propose to use topological methods for Problems~\ref{prob:estimate_N} and~\ref{prob:odometry} on $\gamma$. We work under genericity assumptions on $f$ and an upper bounded noise $W$.
In Proposition~\ref{prop:periodic_multiplicity}, we express the persistence diagram of $N$ periods of $f$ using the diagram of a period of $f$. Using that result, we propose an estimator of $\gamma(1)$ in Section~\ref{sec:method_multiplicity_points}. We show this estimator solves Problem~\ref{prob:estimate_N} under genericity assumptions on $f$ and on the amplitude of the noise. We propose a robust alternative suitable in practice in Section~\ref{sec:pratical_adaptations} and test its performance in Section~\ref{sec:numerical_results}.
Problem~\ref{prob:odometry} is tackled in Section~\ref{sec:method_odometry}, where we introduce a method to calculate an odometric sequence from the data contained in the persistence diagram.
We test the estimator for Problem~\ref{prob:estimate_N} on synthetic signals and we show the applicability of the odometric sequence in practice in Section~\ref{sec:magnetic_odometry}.

\section{Topological descriptors of real-valued functions}
\label{sec:background_persistence}
We introduce the necessary background in persistent homology of dimension zero, applied to scalar functions. For a more exhaustive treatment of algebraic topology or persistent homology, we refer the interested reader to \citep{hatcher_algebraic_2002} and \citep{edelsbrunner_topological_2002} respectively. We end this section with Proposition~\ref{prop:periodic_multiplicity}, which characterizes the persistence diagram for periodic functions.

Let $f: X \rightarrow \R$ be a continuous function with a finite number of local extrema, defined on a compact and connected topological space $X$. Consider the collection of increasing spaces $(f^{-1}(\rbrack-\infty,r\rbrack))_{r\in\R}$, called the sublevel set filtration of $X$ induced by $f$, and the set of local minima $\mathcal{C}$ of $f$
\begin{equation*}
\mathcal{C} = \{c\in
X
\mid \exists V_c \text { a neighborhood of } c,\ f(c)\leq f(x),\, \forall x\in V_c \}.
\end{equation*}
The persistence diagram $D(f)$ of $f$ is a multi--set in $\R^2$, capturing information about the local extrema of $f$ and topological information about $X$. Starting from $r=-\infty$ and increasing $r$, each element $c\in\mathcal{C}\subset
X
$ appears for the first time in level $r=s(c)$ as an isolated point.  For this local minimum, we will add a point in the persistence diagram, with coordinates $(b,d)$, called the birth and death value respectively. The birth $b=f(c)$ is the value at which the component first appears. The death $d$ is the supremum of values $y$, such that $c$ realizes the minimum of $f$ on its connected component in $f^{-1}(\rbrack -\infty,\, y\rbrack)$. In addition, we will say that the point $(b,d)$ is associated to the local minimum $c$, or that $c$ is the birth point of $(b,d)$. This defines an additional function $\mathcal{C}\rightarrow D\subset \R^2$, associating to each local minimum a point in the diagram.
We note that since $X$ is connected and $f$ has a finite number of local extrema, for any $\alpha>M$, where $M$ is the largest local maximum value, $f^{-1}(\rbrack-\infty, \alpha\rbrack)=X$ has a single connected component. We are exclusively interested in functions defined on a bounded interval, that we can suppose $X=[0,1]$ without loss of generality. Inspired by the presentation of the persistence algorithm in~\cite{chazal_persistence-based_2013}, we write Algorithm~\ref{alg:persistence}, which summarizes how the persistence diagram and the function $\mathcal{C}\rightarrow D$ are calculated. In our simulations, we use the SimplexTree from the GUDHI library~\citep{gudhi} to calculate the persistence diagrams and manipulate the complex.

\begin{algorithm}
	\begin{algorithmic}[1]
		\Procedure{Persistence diagram}{f}
		\State $D  \leftarrow \{\}$
		\For{$c$ a local extremum of $f$, ordered by ascending value of $f$}
		\If{$c$ is a local min}
		\State Push $(f(c),*)$ to $D$ with key $c$.
		\Else \Comment{$c$ is a local max}
		\State $I \leftarrow$ connected component of $c$ in $f^{-1}(\rbrack -\infty, f(c) \rbrack)$.
		\State $c_1,\,c_2 \leftarrow \argmin_{x<c} f(I)$, $\argmin_{x>c} f(I)$.
		\State $i \leftarrow \argmax\{f(c_1), f(c_2)\}$
		\State Set the second coordinate of $D(c_i)$ to $f(c)$.
		\EndIf
		\EndFor
		\State Set the second coordinate of $D(\argmax(s))$ to $\max(f)$.
		
		\Return{$D$}
		\EndProcedure
	\end{algorithmic}
	\caption{Computing the annotated persistence diagram of dimension 0 of a sublevel set filtration of $f$}
	\label{alg:persistence}
\end{algorithm}
The persistence of a point $(b,d)$, $\pers(b,d) = d-b$ is often associated with the importance of the corresponding topological feature. The set of points with zero persistence is the diagonal $\Delta=\{(x,x)\mid x\in\R\}$. It is often considered part of the persistence diagram, with each point included with infinite multiplicity. This is justified at the level of the algebraic definition of persistence and motivated by stability considerations. For $\tau>0$, we will denote by $\Delta_\tau = \{(b,d)\mid \pers(b,d)\leq 2\tau\}$ the $\tau$ thickening of the diagonal, which is the set of points with persistence less than $2\tau$.

A matching between two diagrams $D,\,D'$ is a bijection $M: (D\cup\Delta)\rightarrow (D'\cup\Delta)$.
The cost of that matching is the maxium distance between two matched elements $\max_{a\in D\cup\Delta}\Vert a-M(a) \Vert_\infty$. The bottleneck distance \cite[VIII.2]{herbert_edelsbrunner_computational_2010} between two diagrams is defined as the infimum cost a matching $M$ over all possible matchings $M: (D\cup\Delta)\rightarrow (D'\cup\Delta)$
\begin{equation*}
\label{eq:bottleneck_distance}
\bottleneck(D, D') = \inf\limits_{M} \sup_{a\in D\cup\Delta}\Vert a-\Gamma(a) \Vert_\infty.
\end{equation*}
It is a classical result that the persistence diagram associated to a function is then stable in bottleneck distance~\citep{cohen-steiner_stability_2007, chazal_structure_2016}. Formally, for two diagrams $D(f), D(g)$ associated to functions $f,\,g$,
\begin{equation}
\label{eq:persistence_stability}
\bottleneck(D(f), D(g)) \leq \Vert f - g\Vert_\infty.
\end{equation}
We recall that $d_1(D(f), 0) = \sum_{p\in D(f)} \pers(p)/2$.

While persistence diagrams are often viewed as multi--sets, they can be interpreted as discrete measures. For the persistence diagram $D(f)$, we define the discrete measure $\mu_f$ associated to that diagram as
\begin{equation*}
\label{eq:diagram_as_measure}
\mu_f = \sum_{p\in D(f)} \delta_p,
\end{equation*}
where $\delta_p$ denotes the point measure of mass one at $p$. With this notations, the number of points in $A\subset \R^2$ is given by $\mu_f(A)$. We introduce the notation for the multiplicity of a point
\begin{equation*}
\langle p\rangle_f = \mu_f(\{p\}).
\end{equation*}
The bottleneck distance between measures can be defined for measures coming from persistence diagrams.

Let $f:\R\rightarrow\R$ be a continuous function, with period one and a finite number local extrema. We can see  $\restr{f}{[0,1]}$ the restriction of $f$ to $[0,1]$ as a function on the circle $\bar{f}:\Se^1\rightarrow \R$, via the canonical projection $\R\rightarrow \Se^1 \simeq \R/\Z, t\mapsto (\cos(2\pi t), \sin(2\pi t))$.
\begin{proposition}
	\label{prop:periodic_multiplicity}
	For $N\in\N$, we consider
	\begin{equation*}
	\begin{array}{cccc}
	f_N:&[0,1]&\rightarrow&\R\\
	&t&\mapsto&f(Nt).
	\end{array}
	\end{equation*} Then,
	\begin{equation*}
	\mu_{\bar{f_N}} = N\mu_{\bar{f}}.
	\end{equation*}
\end{proposition}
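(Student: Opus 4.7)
The plan is to exploit the factorisation $\bar{f_N} = \bar{f}\circ p_N$, where $p_N:\Se^1\to\Se^1,\ e^{2\pi i t}\mapsto e^{2\pi i N t}$, is the $N$-fold covering map of the circle. Since $p_N$ is a local diffeomorphism, every local extremum of $\bar{f_N}$ is a lift of a local extremum of $\bar{f}$ of the same type (minimum or maximum) and with the same value, and each local extremum of $\bar{f}$ admits exactly $N$ such lifts. In particular, the multiset of birth coordinates of $\mu_{\bar{f_N}}$ is already $N$ times that of $\mu_{\bar{f}}$, so all that remains is to verify that the birth--death pairing lifts correctly.

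Next, I would analyse the sublevel filtration through the cover. For every $r\in\R$ one has $\bar{f_N}^{-1}(\rbrack-\infty,r\rbrack) = p_N^{-1}(\bar{f}^{-1}(\rbrack-\infty,r\rbrack))$. Writing $M = \max \bar{f}$, for every $r<M$ each connected component of $\bar{f}^{-1}(\rbrack-\infty,r\rbrack)$ is a contractible arc of $\Se^1$, hence the restriction of $p_N$ to its preimage is a trivial $N$-cover and that preimage is the disjoint union of $N$ homeomorphic copies. Consequently the number of connected components of the cover sublevel set equals $N$ times that of the base, and two cover components lying above the same base component are carried through $p_N$ onto the same set of local minima of $\bar{f}$, so they share their oldest-minimum value. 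At $r = M$ both sublevel sets jump to $\Se^1$, with one connected component each.

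I would then run the persistence algorithm on the cover in parallel with the base, splitting the two diagrams according to whether the death value is strictly below $M$ or equal to $M$. At a local maximum $c$ of $\bar{f}$ with value $d<M$, the two base components meeting at $c$ lift, near each of the $N$ preimages $c^{(1)},\dots,c^{(N)}$, to pairs of cover components that project onto exactly those two base components; by the previous step these lifted pairs inherit the same birth values as the base pair, so the elder rule at each $c^{(j)}$ records the same coordinates $(b,d)$ as at $c$. This contributes $N$ cover points $(b,d)$ for each base point with $d<M$. For $d = M$, the base contributes a single essential point with birth $b_1 = \min \bar{f}$ paired with $M$ by the final line of Algorithm~\ref{alg:persistence}, while just below $M$ the cover has $N$ components, each containing a lift of the global minimum of $\bar{f}$ and hence of age $b_1$; crossing $r=M$ collapses them into $\Se^1$, killing $N-1$ at value $M$ and leaving one essential class paired with $\max \bar{f_N} = M$, so the cover contributes $N$ points $(b_1, M)$, again $N$ times the base contribution.

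The main technical obstacle will be making the sheet-wise matching rigorous beyond the Morse case: when $\bar{f}$ has several extrema sharing the same value, the elder rule requires tie-breaking and the identification of ``which cover component merges with which'' becomes more delicate. A robust way around this is to invoke the free $\Z/N$-action by deck transformations of $p_N$: it preserves $\bar{f_N}$ and therefore permutes the birth--death pairs, and since it acts freely on the finite set of local extrema of $\bar{f_N}$ every orbit has cardinality exactly $N$, forcing all multiplicities in $\mu_{\bar{f_N}}$ to be multiples of $N$. Combined with the sublevel-set counting argument above, this pins down $\mu_{\bar{f_N}} = N\mu_{\bar{f}}$ without having to commit to any particular tie-breaking convention.
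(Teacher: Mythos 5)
Your overall route is the same as the paper's: factor $\bar f_N=\bar f\circ p_N$ through the $N$-fold cover, use that for $r<M=\max\bar f$ every component of $\bar f^{-1}(\rbrack-\infty,r\rbrack)$ is a contractible arc so that the cover trivializes over it, and treat the deaths at level $M$ separately. Two steps, however, do not hold as written. First, your treatment of $d=M$ assumes the maximum of $\bar f$ is attained at a single point: you say "the base contributes a single essential point with birth $\min\bar f$" and "just below $M$ the cover has $N$ components, each containing a lift of the global minimum". If $M$ is attained at $k>1$ points (which happens, e.g., for the degenerate functions to which the paper applies this proposition in Proposition~\ref{prop:assumption_is_necessary}), the base has $k$ points with death $M$ with different births, and the cover has $Nk$ of them; one still has to match their births. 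The paper closes this by counting: the multiset of all birth values of $\bar f_N$ is $N$ times that of $\bar f$ (your first paragraph), the points with death $<M$ already account for $N$ copies of the corresponding base points, and exactly $Nk$ points of the cover die at $M$ (the $Nk$ components just below $M$ merge into one, plus the essential class), so the remaining births are forced.

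Second, your "robust" fallback is not sufficient. The deck action induces an automorphism of the filtration, and a persistence diagram is always invariant under automorphisms, so by itself this constrains nothing; to deduce that multiplicities of $\mu_{\bar f_N}$ are multiples of $N$ from the free action on minima you need the assignment of local minima to diagram points to be equivariant, which is exactly the tie-breaking choice you were trying to avoid. Moreover, even granting divisibility, divisibility plus level-wise component counts does not pin down the diagram: take $\bar f$ with extremal values $0,2,1,3$ around the circle, so $\mu_{\bar f}=\delta_{(1,2)}+\delta_{(0,3)}$; for $N=2$ the measure $2\delta_{(0,2)}+2\delta_{(1,3)}$ has the same number of sublevel-set components at every level as $2\mu_{\bar f}$ and all multiplicities divisible by $2$, yet differs from $2\mu_{\bar f}$. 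The clean way to avoid tie-breaking is to argue at the level of the persistence module, as the paper does: for every $\alpha<M$ the trivialization gives $H_0\bigl((\bar f\circ p_N)^{-1}(\rbrack-\infty,\alpha\rbrack)\bigr)\cong\oplus_{n=1}^{N}H_0\bigl(\bar f^{-1}(\rbrack-\infty,\alpha\rbrack)\bigr)$ compatibly with the inclusion-induced maps, so the restriction of the diagram to $\{d<M\}$ is $N$ copies of the base diagram by the structure theorem, with no reference to the elder-rule bookkeeping; the $d=M$ stratum is then handled by the counting argument above.
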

We refer the interested reader to Appendix~\ref{app:proof_periodic_multiplicity} for a proof. Proposition~\ref{prop:periodic_multiplicity} states that the function $N\mapsto \mu_{\bar{f_N}}$ is homogeneous. An analogue statement is not valid for $X$ being an interval, because of boundary effects. In the remaining of this work, we will heavily rely on Proposition~\ref{prop:periodic_multiplicity}. We will abuse notation, by identifying the persistence diagram of $f$ with the persistence diagram of $\bar{f}$.

\section{Estimation of $N$ in the noiseless setting}
\label{sec:model}
Let $f:\R\rightarrow\R$ be one--periodic and $\gamma:[0,1]\rightarrow [0,N]$ an increasing bijection for $N\in\N^*$, as above. In this section, we consider the noiseless setting $S=f\circ\gamma$.
We introduce an estimator of $N$
\begin{equation*}
N(f) = \gcd\{\langle p\rangle_f \mid p\in \supp(\mu_f)\}.
\end{equation*}
Recall that for a set $A\subset \N$, the greatest common divisor $\gcd(A)$ is the largest $k\in\N$ such that for all $a\in A$, $k$ divides $a$. For $A\subset \N^*$ non empty, $1\leq \gcd(A)<\infty$. We adopt the convention that $\gcd(\emptyset)=1$.
The proposed estimator satisfies a homogeneity property, in the sense detailed by Proposition~\ref{prop:N}.
\begin{proposition}
	\label{prop:N}
	Suppose that $\gamma(1)=N$. Then, 
	$$N(f\circ \gamma) = N\cdot N(\restr{f}{[0,1]}).$$
\end{proposition}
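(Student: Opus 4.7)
The plan is to reduce the statement to Proposition~\ref{prop:periodic_multiplicity} by exploiting the reparametrisation invariance of sublevel set persistence. The key observation is that the persistence diagram only depends on the filtration up to filtered homeomorphism: if $\varphi: X \to Y$ is a homeomorphism and $h: Y \to \R$ is continuous, then $(h\circ\varphi)^{-1}(\rbrack-\infty,r\rbrack) = \varphi^{-1}(h^{-1}(\rbrack-\infty,r\rbrack))$, so the sublevel sets of $h\circ\varphi$ and $h$ are homeomorphic for every $r$ and their connected components evolve in lockstep. Consequently $D(h\circ\varphi) = D(h)$ as multi-sets in $\R^2$.

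Applying this with $\varphi = \gamma:[0,1]\to[0,N]$ and $h = \restr{f}{[0,N]}$ gives $\mu_{f\circ\gamma} = \mu_{\restr{f}{[0,N]}}$. The same reasoning applied to the linear reparametrisation $t\mapsto Nt$ gives $\mu_{f_N} = \mu_{\restr{f}{[0,N]}}$, where $f_N$ is defined as in Proposition~\ref{prop:periodic_multiplicity}. Hence
\begin{equation*}
\mu_{f\circ\gamma} = \mu_{f_N}.
\end{equation*}
Under the convention of identifying the persistence diagram of a function on $[0,1]$ with that of its circular counterpart (which is legitimate since $f(0)=f(N)$ implies that both $f\circ\gamma$ and $f_N$ descend to continuous functions on $\Se^1$), Proposition~\ref{prop:periodic_multiplicity} yields $\mu_{f_N} = N\mu_{\bar{f}}$, and therefore $\mu_{f\circ\gamma} = N\mu_{\bar{f}}$.

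It remains to translate this equality of measures into the claim about greatest common divisors. Since multiplying all multiplicities by $N$ leaves the support unchanged, $\supp(\mu_{f\circ\gamma}) = \supp(\mu_{\bar{f}})$, and for each $p$ in this common support $\langle p\rangle_{f\circ\gamma} = N\langle p\rangle_{\bar{f}}$. Using the elementary identity $\gcd(N\cdot A) = N\cdot\gcd(A)$ for any nonempty $A\subset\N^*$ (together with the convention $\gcd(\emptyset)=1$ in the trivial case where $f$ has no non-diagonal persistence), we obtain
\begin{equation*}
N(f\circ\gamma) \;=\; \gcd\{\langle p\rangle_{f\circ\gamma}\mid p\in\supp(\mu_{f\circ\gamma})\} \;=\; N\cdot\gcd\{\langle p\rangle_{\bar{f}}\mid p\in\supp(\mu_{\bar{f}})\} \;=\; N\cdot N(\restr{f}{[0,1]}),
\end{equation*}
which is the desired conclusion.

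The only point that requires any care is the reparametrisation invariance of the sublevel set persistence diagram; once this is granted, Proposition~\ref{prop:periodic_multiplicity} does all the work, and the gcd identity is purely arithmetic.
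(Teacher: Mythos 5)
Your proof is correct and follows essentially the same route as the paper's: both establish $\mu_{f\circ\gamma}=\mu_{f_N}$ by the reparametrisation invariance of sublevel-set persistence (the paper phrases this as a bijection between connected components of sublevel sets induced by $\gamma^{-1}$), then invoke Proposition~\ref{prop:periodic_multiplicity} to get $\mu_{f_N}=N\mu_{\restr{f}{[0,1]}}$, and finish with the identity $\gcd(N\cdot A)=N\gcd(A)$. Your explicit remark on identifying interval and circular diagrams matches the paper's stated abuse of notation, so nothing substantive differs.
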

\begin{proof}
	Consider $f_N$ as defined in Proposition~\ref{prop:periodic_multiplicity}. The connected components of the sublevel sets
	$$(f\circ\gamma)^{-1}(\rbrack-\infty, a\rbrack) = \{t\mid f(\gamma(t))\leq a\} = \gamma^{-1}(\{x\mid f_N(x)\leq a\})$$
	are in bijection with the components of $f_N^{-1}(\rbrack-\infty, a\rbrack)$. Hence, $\mu_{f_N}=\mu_{f\circ\gamma}$ and $\supp(\mu_{f_N})=\supp(\mu_{f_1})$. By Proposition~\ref{prop:periodic_multiplicity}, we obtain $\mu_{f\circ\gamma} = \mu_{f_N} = N\mu_{f_1} = N\mu_{\restr{f}{[0,1]}}$, so that
	\begin{align*}
	N(f\circ \gamma)
	&= \gcd\{\mu_{f\circ\gamma}(\{p\}) \mid p\in \supp(\mu_{f_N})\}\\
	&= \gcd\{ N\mu_{\restr{f}{[0,1]}}(\{p\}) \mid p\in \supp(\mu_{f_1})\}\\
	&= N\cdot \gcd\{\langle p\rangle_{\mu_{\restr{f}{[0,1]}}} \mid p\in \supp(\mu_{f_N})\}\\
	&= N\cdot N(\restr{f}{[0,1]}).
	\end{align*}
\end{proof}
The estimator $N(f)$ is correct only up to a multiplicative constant $N(\restr{f}{[0,1]})$, so we need to introduce the notion of non--degenerate functions.

\begin{definition}
	\label{def:non_degenerate}
	We call a one--periodic function $f:\R\rightarrow\R$ non--degenerate if
	\begin{equation}
	\label{eq:hyp_gcd}
	N(\restr{f}{[0,1]})=1.
	\end{equation}
	A function $f:\R\rightarrow\R$ which does not satisfy \eqref{eq:hyp_gcd} will be called degenerate.
\end{definition}
The non--degeneracy of a function $f$ is a condition on the set of pairs of local extreme values and Proposition~\ref{prop:assumption_is_necessary} provides a justification for restricting our considerations to non--degenerate functions. Indeed, if $f$ is a degenerate function, then there exists a function $g$, which has the same persistence diagram as $f$, but for which Problem~\ref{prob:estimate_N} is not identifiable.

\begin{proposition}
	\label{prop:assumption_is_necessary}
	Let $f:\R\rightarrow\R$ be a one--periodic function. There exists a $1/N(\restr{f}{[0,1]})$--periodic function $g:\R\rightarrow \R$, such that
	$$\mu_{\restr{f}{[0,1]}}=N(\restr{f}{[0,1]})\mu_{\restr{g}{[0,1]}}.$$
\end{proposition}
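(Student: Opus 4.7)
Set $M := N(\restr{f}{[0,1]})$. My plan is to verify that $\nu := \mu_{\restr{f}{[0,1]}}/M$ is a well--defined, positive-integer--valued discrete measure on $\R^2$, to realize $\nu$ as the persistence diagram of some continuous $1$--periodic function $h$, and then to define $g(t) := h(Mt)$ and apply Proposition~\ref{prop:periodic_multiplicity}.

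The first step is immediate from the definition of the gcd: since $M = \gcd\{\langle p\rangle_{\restr{f}{[0,1]}} : p\in\supp(\mu_{\restr{f}{[0,1]}})\}$ divides every multiplicity in $\mu_{\restr{f}{[0,1]}}$, the measure $\nu$ has positive integer mass at every atom of its support (which coincides with $\supp(\mu_{\restr{f}{[0,1]}})$). Granting a realization $h$ of $\nu$, the function $g(t) := h(Mt)$ is $1/M$--periodic because $h$ is $1$--periodic, and Proposition~\ref{prop:periodic_multiplicity} applied to $h$ with parameter $M$ yields
$$\mu_{\bar g} \;=\; M \mu_{\bar h} \;=\; M \nu \;=\; \mu_{\restr{f}{[0,1]}},$$
which is the required identity (interpreting the factor $M$ in the statement as encoding the $M$--fold duplication produced by passing from a single fundamental period of $g$, whose diagram is $\nu$, to the full circle view).

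The main obstacle is the realization step: given a finitely supported positive-integer-valued discrete measure $\nu$ on $\{(b,d) : b \le d\} \subset \R^2$ whose support contains a distinguished ``essential'' atom $(b_\infty, d_\infty)$ with $b_\infty = \min\{b : (b,d) \in \supp(\nu)\}$ and $d_\infty = \max\{d : (b,d) \in \supp(\nu)\}$, build a continuous $1$--periodic function $h$ with $\mu_{\bar h} = \nu$. I would construct $h$ as a piecewise linear function on $\Se^1$ by placing $n := \sum_p \nu(\{p\})$ local minima and $n$ local maxima alternately around the circle, with values drawn with the prescribed multiplicities from the support of $\nu$. One designated min/max pair is placed at the values $(b_\infty, d_\infty)$ and will yield the essential point; each remaining finite--persistence atom of $\nu$ is realized by inserting an adjacent min/max ``tooth'' into the base profile at the correct values. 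An induction on the insertions, ordered from largest to smallest persistence, combined with a local analysis of the Elder--rule pairing from Algorithm~\ref{alg:persistence} in a neighborhood of each newly inserted tooth, shows that the resulting $h$ has $\mu_{\bar h} = \nu$, completing the proof.
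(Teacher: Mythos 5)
Your overall architecture is the same as the paper's: divide $\mu_{\restr{f}{[0,1]}}$ by the gcd to get an integer-valued diagram $\nu$, realize $\nu$ as the diagram of a periodic function, and transfer back via Proposition~\ref{prop:periodic_multiplicity}. The gap is in the realization step, which is the actual content of the paper's proof (its Lemma~\ref{lem:diagram_is_realizable}): your construction is only sketched, and the specific recipe you give --- insert min/max ``teeth'' next to the base profile in order of \emph{decreasing persistence} and check the Elder-rule pairing locally --- does not work in general. Take $\nu$ with atoms $(0,10)$ (essential), $(5,9)$ and $(6,9.5)$. Inserting $(5,9)$ first and then $(6,9.5)$ adjacent to the base minimum produces the cyclic sequence of extremal values $0,\,9.5,\,6,\,9,\,5,\,10$; running Algorithm~\ref{alg:persistence} on this profile, the first merge happens at level $9$ and kills the component born at $6$, the next at level $9.5$ kills the one born at $5$, so the realized diagram is $\{(0,10),(6,9),(5,9.5)\}$, not $\nu$. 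The failure is structural: whenever two atoms have crossing persistence intervals (both birth and death larger), persistence order and birth order disagree, and the pairing at an inserted tooth's maximum is decided by the minimal birth on the far side of that maximum up to the first higher maximum --- a non-local quantity, so no ``local analysis of the Elder-rule pairing in a neighborhood of each newly inserted tooth'' can certify it. The paper's Lemma~\ref{lem:diagram_is_realizable} avoids exactly this by sorting the atoms by \emph{increasing birth} (decreasing death for ties) and laying the values out as the zigzag $d_0,b_0,d_1,b_1,\dots$, so that across each maximum $d_k$ the neighbouring component always has a strictly older birth and the intended pairing is forced. You also assume without proof that $\nu$ has an essential atom $(b_\infty,d_\infty)$ with extremal birth and death; this is true, but it needs the paper's Lemma~\ref{lem:diagram_of_a_fct} (that $(\min f,\max f)\in D(\restr{f}{[0,1]})$ and all points lie in the corresponding box), which you should invoke or reprove.

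A smaller bookkeeping point: your final function $g(t)=h(Mt)$ satisfies $\mu_{\restr{g}{[0,1]}}=M\nu=\mu_{\restr{f}{[0,1]}}$, i.e.\ \emph{without} the factor $M$ in the displayed identity, whereas the paper's proof takes $g=h$ itself (the $1$-periodic realization of $\nu$), which satisfies the displayed identity $\mu_{\restr{f}{[0,1]}}=M\,\mu_{\restr{g}{[0,1]}}$ verbatim. The statement's periodicity claim and its displayed equation cannot both hold literally when $M\geq 2$, so your reinterpretation is defensible, but you should say explicitly which version you prove rather than folding the discrepancy into an ``interpretation'' of the factor $M$.
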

\begin{figure}
	\centering
	\includegraphics[width=0.7\textwidth]{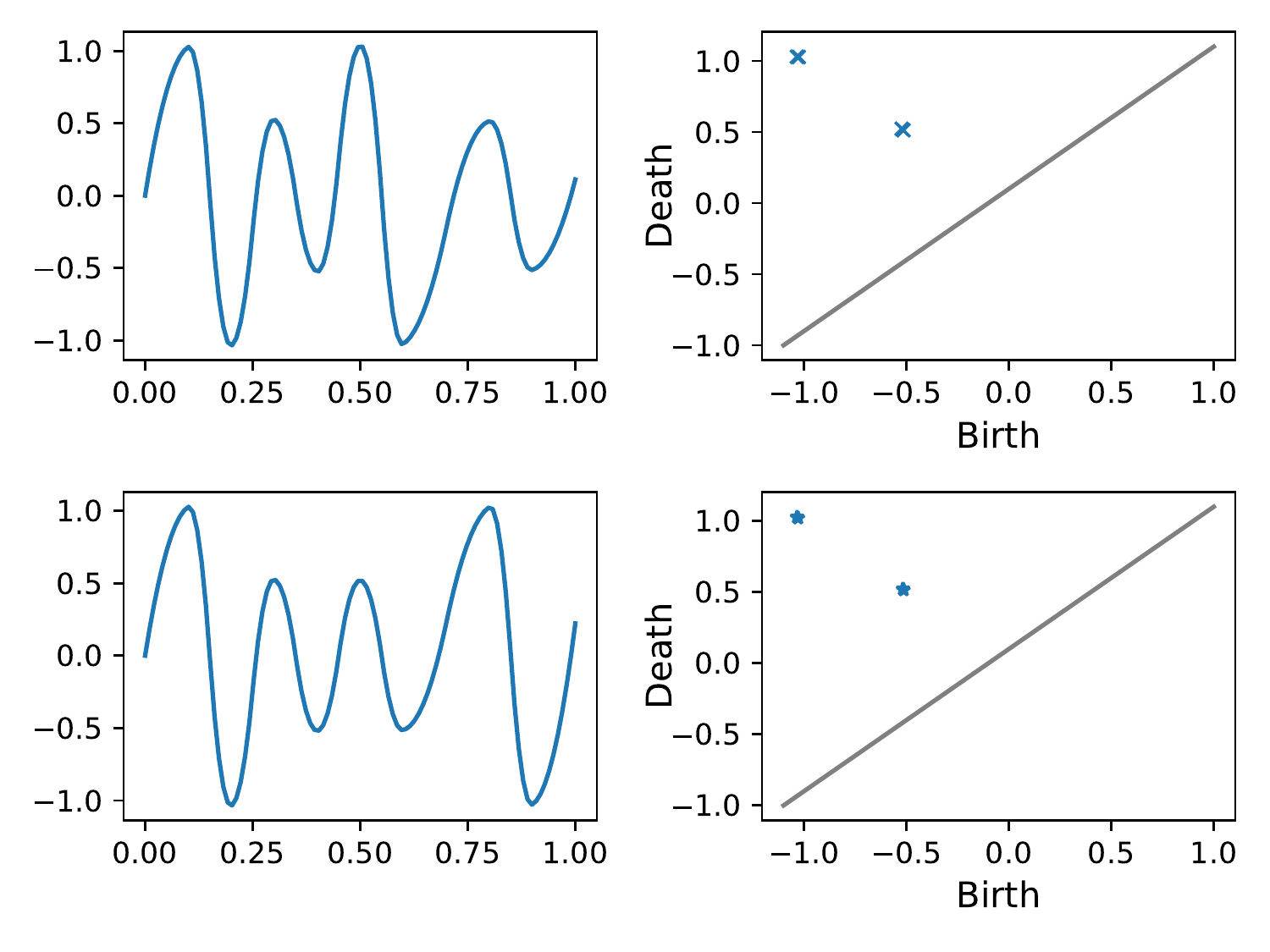}
	\caption{Graphs of two degenerate one--periodic functions and their persistence diagrams.}
	\label{fig:hyp_gcd_illustration}
\end{figure}
We discuss the identifiability of the problem and the degeneracy in Figure~\ref{fig:hyp_gcd_illustration}, which shows the graphs and persistence diagrams of two degenerate functions.
For $f$ in the top row, the problem of inference on $\gamma$ is not identifiable, since $\restr{f}{[0,1/2]}$ could be re--parametrized to have the same graph as $\restr{f}{[1/2,1]}$. In particular, the function $f$ is degenerate. The function $g$ is also degenerate, because it induces the same persistence diagram as $f$. However, its extremal values occur in a different order, so the same re-parametrization trick cannot be applied and the problem is identifiable. This situation exemplifies a limitation of our method and more generally, of making the inference using only the persistence diagram.

\begin{figure}
	\centering
	\includegraphics[width=0.34\textwidth]{./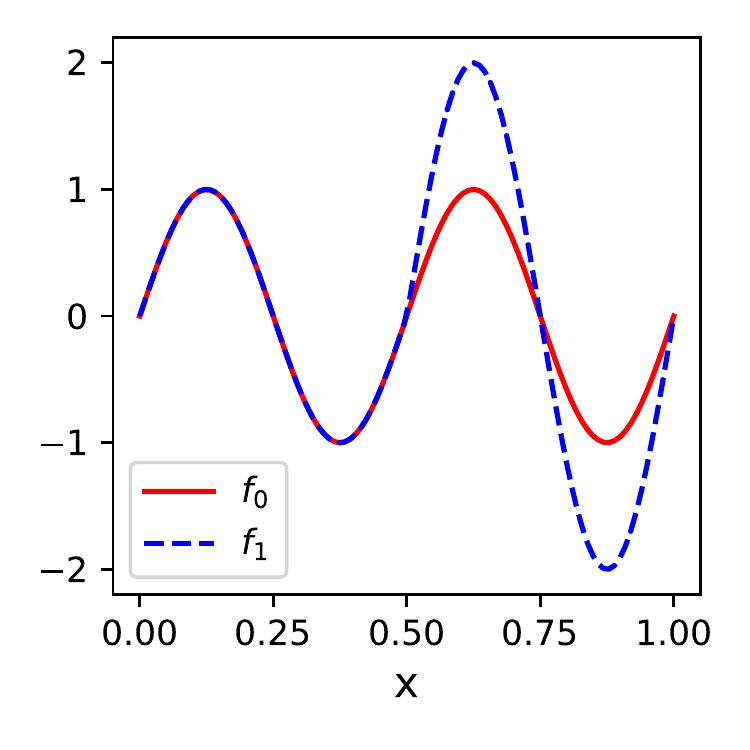}
	\includegraphics[width=0.34\textwidth]{./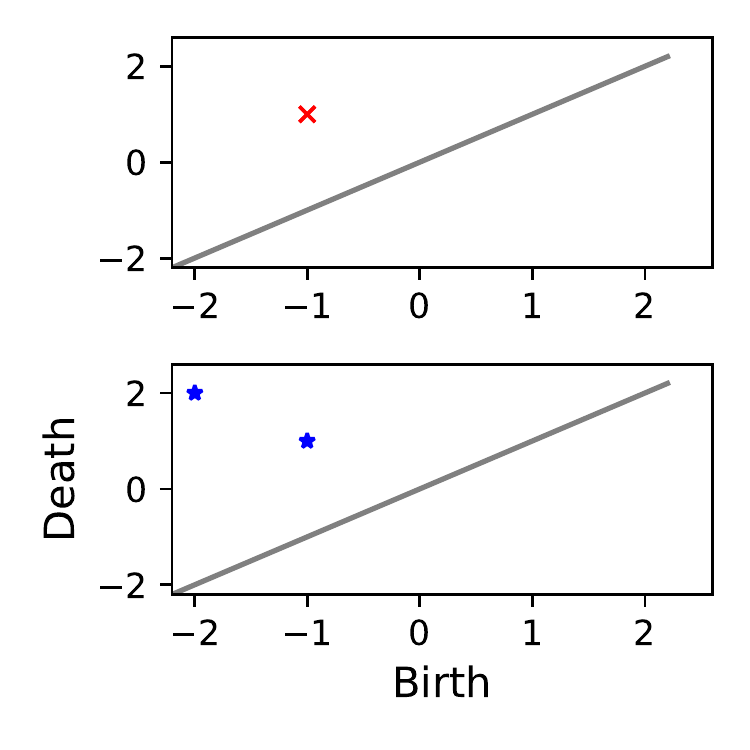}
	\caption{On the left, graphs of $\restr{f_0}{[0,1]}$ and $\restr{f_1}{[0,1]}$. On the right, their persistence diagrams.}
	\label{fig:delta_degenerate_example}
\end{figure}
Non--degeneracy~\eqref{eq:hyp_gcd} restricts the class of considered functions, but it does not quantify the difficulty of the problem of estimation of $N$.
Figure~\ref{fig:delta_degenerate_example} illustrates two functions from a family of functions $(f_r)_{r\geq0}$
\begin{equation*}
\label{eq:degenerate_periodic}
f_r(x) =
\begin{cases}
\sin(4\pi x), &\ x\in[0,1/2]\\
(1+r)\sin(4\pi x), &\ x \in \rbrack 1/2,1\rbrack
\end{cases},
\end{equation*}
and extended by 1-periodicity to $\R$.
Here, $f_0$ is half--periodic and degenerate, while $f_r,\,r>0$ is not. Indeed, the persistence measure $\mu_{f_r}$ is supported on the points $(-1-r, 1+r)$ and $(-1, 1)$, each with multiplicity 1.
 As $r\rightarrow0$, $f_r\xrightarrow{\Vert\cdot\Vert_\infty} f_0$ and, while $\bottleneck(\mu_{f_0}, \mu_{f_r})\rightarrow 0$, we have $N(\restr{f_r}{[0,1]}) = 1$ for all $r>0$. 
To quantify degeneracy, we introduce $\delta$ which measures the separation of points in a diagram $\mu$, from each other and from the diagonal
\begin{equation}
\label{cond:separation}
\delta = \min(\{d(p,q) \mid p, q\in \supp(\mu),\, p\neq q\} \cup \{d(p,\Delta) \mid p\in \supp(\mu)\}).
\end{equation}
The next proposition lower--bounds how far a non--degenerate function is from a degenerate function, in terms of its separation $\delta$.

\begin{proposition}
	\label{prop:distance_to_degenerate}
	Let $f:\R\rightarrow\R$ be a non--degenerate function and $\delta>0$ be the separation of $\mu_f$. For any degenerate $g: \R\rightarrow \R$ with $N(\restr{g}{[0,1]})=n\geq 2$,
	\begin{equation*}
	\bottleneck(\mu_{\restr{f}{[0,1]}}, \mu_{\restr{g}{[0,1]}}) \geq\frac{\delta}{2}.
	\end{equation*}
\end{proposition}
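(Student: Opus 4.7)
The plan is to argue by contradiction. Suppose $\bottleneck(\mu_{\restr{f}{[0,1]}}, \mu_{\restr{g}{[0,1]}}) < \delta/2$, and pick a matching $M: D(f)\cup\Delta \to D(g)\cup\Delta$ of cost strictly below $\delta/2$. The target conclusion is that under such a matching, the degeneracy factor $n = N(\restr{g}{[0,1]})$ must divide every multiplicity $\langle p\rangle_f$, so that $n$ divides $\gcd\{\langle p\rangle_f : p\in\supp(\mu_f)\} = N(\restr{f}{[0,1]}) = 1$, contradicting $n \geq 2$.

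The mechanism is a local confinement argument around each $p \in \supp(\mu_f)$. By the definition of $\delta$, the open $\ell^\infty$ ball $B_p := B(p,\delta/2)$ contains no other support point of $\mu_f$ and is disjoint from the diagonal $\Delta$; moreover, the family $(B_p)_{p\in\supp(\mu_f)}$ is pairwise disjoint. Since $M$ has cost $<\delta/2$, each of the $\langle p\rangle_f$ copies of $p$ must be matched to some element at distance $<\delta/2$ from $p$, and since $\Delta \cap B_p = \emptyset$, those targets lie in $T_p := B_p \cap \supp(\mu_g)$. Conversely, for any $q \in T_p$, the triangle inequality gives $d(q,\Delta) \geq d(p,\Delta) - d(p,q) > \delta - \delta/2 = \delta/2$, and for every other $p' \in \supp(\mu_f)$, $d(p',q) \geq d(p',p) - d(p,q) > \delta/2$; so all $\langle q\rangle_g$ copies of $q$ must be matched to copies of $p$ (neither to another support point of $\mu_f$ nor to $\Delta$). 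Combining the two inclusions yields the identity
\begin{equation*}
\langle p\rangle_f = \sum_{q \in T_p} \langle q\rangle_g.
\end{equation*}

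Since $g$ is degenerate with $N(\restr{g}{[0,1]}) = n$, the integer $n$ divides every $\langle q\rangle_g$, so the displayed identity forces $n \mid \langle p\rangle_f$ for each $p \in \supp(\mu_f)$, hence $n \mid N(\restr{f}{[0,1]}) = 1$, contradicting $n \geq 2$. The only delicate point is ensuring that no copy of a $q \in T_p$ leaks to the diagonal under $M$; this is precisely where the choice to define $\delta$ as the minimum of both the support--support distances \emph{and} the support--diagonal distances is crucial, via the triangle inequality above. The remainder of the argument is bookkeeping.
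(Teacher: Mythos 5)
Your proof is correct, and it is organized differently from the paper's. The paper argues directly on an arbitrary matching: using non--degeneracy it picks a single point $p_0$ with $n\nmid\langle p_0\rangle_{\restr{f}{[0,1]}}$, shows via bijectivity that some point $p_1$ of $D(\restr{f}{[0,1]})$ with a different geometric realization must be sent to the same geometric target as some copy $p_2$ of $p_0$, and then concludes with the midpoint inequality $\max(\Vert p_1-M(p_1)\Vert_\infty,\Vert p_2-M(p_2)\Vert_\infty)\geq\tfrac12(\Vert p_1-q\Vert_\infty+\Vert p_2-q\Vert_\infty)\geq\delta/2$, so no contradiction argument and no statement about every support point is needed. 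You instead take the contrapositive: a matching of cost $<\delta/2$ is confined to the disjoint $\delta/2$--balls around $\supp(\mu_f)$ (disjoint from $\Delta$ by the separation), which yields the exact identity $\langle p\rangle_f=\sum_{q\in T_p}\langle q\rangle_g$ at \emph{every} $p$, hence $n\mid N(\restr{f}{[0,1]})=1$, contradicting $n\geq 2$. Your confinement step is sound (the cost bound excludes both the diagonal and other support points as partners, by the 1--Lipschitzness of $d(\cdot,\Delta)$ and the triangle inequality), and it in fact proves a slightly stronger structural fact — local preservation of multiplicities under any sub--$\delta/2$ matching — which is essentially the mechanism of the paper's Lemma~\ref{prop:balls_equality} used later for the noisy estimator; so your route unifies the two arguments, at the mild cost of being indirect where the paper's proof is a direct lower bound needing only one non--divisible multiplicity.
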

\begin{proof}
	For this proof, we consider the diagram as a multi--set. In the argument below, we need to make a distinction between a point $p$ in a diagram, and its geometric realization in $\R^2$, which we denote by $\geometric{p}$: there can be multiple points in $D(f)$ which have the same geometric realizations.
	Let $M: (D(\restr{f}{[0,1]})\cup\Delta) \rightarrow (D(\restr{g}{[0,1]})\cup\Delta)$ be a matching.
	
	As $N(\restr{g}{[0,1]}) = n$ does not divide $1= N(\restr{f}{[0,1]})$, there is $p_0\in\supp(D(\restr{f}{[0,1]}))$ such that
	\begin{equation}
	\label{eq:p_one_definition}
	n \text{ does not divide } \langle \geometric{p_0} \rangle_{\restr{f}{[0,1]}}.
	\end{equation}
	Let us denote by $P=\{p\in D(f)\mid \geometric{p}=\geometric{p_0}\}$ the set of points from the diagram, which have the same geometric realization as $p_0$ and $Q=M(P) = \{ M(p)\mid p\in P\}\subset \R^2$ the image of these points by the matching.
	
	Consider $Q'=\{q'\in D(\restr{g}{[0,1]})\mid \exists q\in Q,\, \geometric{q}=\geometric{q'}\}$ the subset of $D(\restr{g}{[0,1]})$ with geometric realizations in $Q$, and the set $P'=  M^{-1}(Q')\setminus P$.
	Let us show that $P'\neq \emptyset$. Since $ M(P)=Q\subset Q'$, if $P'=\emptyset$, then $P =  M^{-1}(Q')$. The bijectivity of the matching $ M$ implies that $\vert P\vert = \vert Q'\vert$. However, $n$ divides $\vert Q'\vert = P = \langle p_0\rangle_{\restr{f}{[0,1]}}$, so we obtain a contradiction with~\eqref{eq:p_one_definition}.
	
	Consider $p_1\in P'$. As $ M(p_1)\in Q'$, there is $q\in Q$ such that $\geometric{M(p_1)} = q$. Since $Q=\geometric{M(P)}$, $\geometric{M(p_2)}=q$, for some $p_2\in P$. Since $\geometric{p_1}\neq \geometric{p_2}$, $\Vert p_1 - p_2\Vert_\infty \geq \delta$, so that
	\begin{align*}
	\max_{p\in D(\restr{f}{[0,1]})\cup\Delta} \Vert \geometric{p}-\geometric{M(p)}\Vert_\infty
	&\geq \max(\Vert \geometric{p_1}-\geometric{M(p_1)}\Vert_\infty,\, \Vert \geometric{p_2}-\geometric{M(p_2)}\Vert_\infty) \\
	&\geq \frac{1}{2}(\Vert \geometric{p_1}-\geometric{q}\Vert_\infty + \Vert \geometric{p_2}-\geometric{q}\Vert_\infty)\\
	&\geq \frac{\delta}{2}.
	\end{align*}
	
	Since the bottleneck distance $\bottleneck(D(\restr{f}{[0,1]}), D(\restr{g}{[0,1]}))$ is an infimum over matchings $M$, it is also lower--bounded by $\delta/2$.
\end{proof}

For instance, taking the example from Figure~\ref{fig:delta_degenerate_example}, $\mu_{\restr{f_r}{[0,1]}}$ is supported on $\{(-1 -r, 1+r), (-1, 1)\}$, so that $\delta=\min(r, \frac{1}{2})$. However, the closest diagram of a degenerate function is $D=2\cdot\indicator_{(-1-r/2, 1+r/2)}$, which, by Proposition~\ref{prop:assumption_is_necessary} is realizable as the diagram of a function $g:[0,1]\rightarrow\R$. 
\begin{remark}
	In Section~\ref{sec:intro}, we mentioned the zero--crossings method in the context of estimating $N$. The number of times that the graph of a periodic function $f$ crosses $y=0$ can be read from the diagram. For a periodic function, a zero--crossing occurs between two local extrema, both of which have values different from zero. By periodicity, such a pair of extremal values generates exactly two zero--crossings. Hence, the number of zero--crossings is exactly $2\mu_f(\rbrack-\infty, 0\lbrack\times\rbrack0,\infty\lbrack)$.
\end{remark}

\section{Estimation of $N$ in the presence of noise}
\label{sec:method_multiplicity_points}
Suppose now that we observe $S=f\circ\gamma + W$, where $\Vert W\Vert <\epsilon$ is a continuous function $W:[0,1]\rightarrow \R$. Note that $W$ can have an infinite number of local extrema, so, in particular, the iteration over those extrema in Algorithm~\ref{alg:persistence} may not terminate.
In~\cite{chazal_structure_2016}, authors extend the definition of the persistence diagram and some of its properties to a more general setting, which does not require the function to have a finite number of critical points.
Since $[0,1]$ is a polyhedron and $S$ is continuous, the persistence diagram of $S$ can be defined~\cite[Theorem 2.22]{chazal_structure_2016}.
As an additional consequence, for any $\tau>0$, the number of points more persistent than in $\supp(\mu_S)\setminus \Delta_\tau$ is finite.

\begin{remark}
	Requiring the process $W$ to be continuous is reasonable in a stochastic setting. Under mild assumptions, thanks to the Kolmogorov theorem~\cite[Theorem 2.9]{le_gall_brownian_2016}, we can find a modification $\tilde{W}$ of $W$ which is almost--surely continuous.
\end{remark}

In the case when $W$ is a random process and each $W_t$ is a continuous random variable, the probability of extrema of $S$ having the same values is zero. Counting the multiplicity of points in $\mu_{\restr{f}{[0,1]}}$ introduced in Section~\ref{sec:model} needs to be adapted. We replace the notion of multiplicity by that of measures of neighborhoods. For a parameter $\tau>0$,
\begin{equation*}
\langle p\rangle_{S,\tau} = \mu_S(B(p,\tau))
\end{equation*}
is the number of points from $\mu_S$ in $B(p,\tau)$, the $\Vert\cdot\Vert_\infty$-ball of radius $\tau$ centered at $p$. We generalize the definition of $N(S)$ to
\begin{equation}
\label{eq:N_hat}
\hat{N}(S, \tau) = \gcd\{\langle p\rangle_{S,\tau} \mid p\in \supp(\mu_S)\setminus\Delta_\tau\}.
\end{equation}
Notice that in this second version, the $\gcd$ is computed only on points further than $\tau$ from the diagonal.
In the noise--free setting $W=0$, we have $\hat{N}(f\circ\gamma, 0) = N(f\circ\gamma)$.
In general, $\tau$ needs to be sufficiently large compared to the noise $\epsilon$, to account for the local displacement of points. It also needs to be bounded in terms of $\tau$, not to confound the global structure of the diagram.
We make those conditions explicit in Proposition~\ref{prop:N_stability}.
\begin{proposition}
	\label{prop:N_stability}
	Suppose that $\epsilon< \delta/6$ and that $f$ is non--degenerate. Then, for $\tau>0$ satisfying $2\epsilon<\tau<\delta/3$, we have that
	\begin{equation*}
	N(f\circ\gamma) = \hat{N}(S, \tau).
	\end{equation*}
\end{proposition}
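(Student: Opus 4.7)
The plan is to apply the bottleneck stability inequality~\eqref{eq:persistence_stability} to obtain a coupling between $D(S)$ and $D(f\circ\gamma)$, then show that the local counts $\mu_S(B(q,\tau))$ used in $\hat{N}(S,\tau)$ recover exactly the multiplicities of atoms of $\mu_{f\circ\gamma}$. By Propositions~\ref{prop:periodic_multiplicity} and~\ref{prop:N}, $\mu_{f\circ\gamma}=N\mu_{\restr{f}{[0,1]}}$ has finite support whose atoms are pairwise $\delta$-separated and lie at distance at least $\delta$ from $\Delta$. Since $\|W\|_\infty<\epsilon$, stability yields a matching $M:D(S)\cup\Delta\to D(f\circ\gamma)\cup\Delta$ of cost at most $\epsilon$, and everything below will be extracted from this single matching.

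First I would record two facts about $M$. Any $q\in\supp(\mu_S)\setminus\Delta_\tau$ has persistence strictly greater than $2\tau>4\epsilon$, so its distance to $\Delta$ exceeds $\tau>\epsilon$, which forces $M(q)\in\supp(\mu_{f\circ\gamma})$. Conversely, for every $p^*\in\supp(\mu_{f\circ\gamma})$, the $\langle p^*\rangle_{f\circ\gamma}$ preimages $M^{-1}(\{p^*\})$ all lie within $\epsilon$ of $p^*$, hence at distance at least $\delta-\epsilon>\tau$ from $\Delta$, so they belong to $\supp(\mu_S)\setminus\Delta_\tau$. In particular, every atom of $\mu_{f\circ\gamma}$ is witnessed by a point of $\supp(\mu_S)\setminus\Delta_\tau$.

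The main step is a local count: for $q\in\supp(\mu_S)\setminus\Delta_\tau$ and $p^*:=M(q)$, I would show $\mu_S(B(q,\tau))=\langle p^*\rangle_{f\circ\gamma}$. The ``$\geq$'' direction is immediate from $B(p^*,\epsilon)\subset B(q,2\epsilon)\subset B(q,\tau)$ (using $\tau>2\epsilon$), which places all preimages of $p^*$ inside $B(q,\tau)$. The main obstacle is the ``$\leq$'' direction, where I must exclude every other point of $\mu_S$ from $B(q,\tau)$. Suppose $q'\in\supp(\mu_S)\cap B(q,\tau)$. If $M(q')=p'\in\supp(\mu_{f\circ\gamma})\setminus\{p^*\}$, then
\[
\|p^*-p'\|_\infty \leq \|p^*-q\|_\infty+\|q-q'\|_\infty+\|q'-p'\|_\infty \leq 2\epsilon+\tau<\delta,
\]
contradicting the separation. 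If instead $M(q')\in\Delta$, then $\mathrm{dist}(q',\Delta)\leq\epsilon$, whereas $\mathrm{dist}(q',\Delta)\geq\delta-\|p^*-q'\|_\infty\geq\delta-(\tau+\epsilon)>\epsilon$, again a contradiction. Combining the two facts with this count, the sets $\{\langle q\rangle_{S,\tau}:q\in\supp(\mu_S)\setminus\Delta_\tau\}$ and $\{\langle p^*\rangle_{f\circ\gamma}:p^*\in\supp(\mu_{f\circ\gamma})\}$ coincide, and taking $\gcd$ produces $\hat{N}(S,\tau)=N(f\circ\gamma)$. The delicate ingredient is the chain $\tau>2\epsilon$, $\;2\epsilon+\tau<\delta$, $\;\delta-(\tau+\epsilon)>\epsilon$, which the hypotheses $\epsilon<\delta/6$ and $2\epsilon<\tau<\delta/3$ are precisely tuned to deliver.
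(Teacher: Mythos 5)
Your proof is correct and follows essentially the same route as the paper: a bottleneck-stability matching combined with the $\delta$-separation of $\supp(\mu_{f\circ\gamma})$ shows that each $\tau$-ball count $\langle q\rangle_{S,\tau}$ equals an atom multiplicity $\langle p\rangle_{f\circ\gamma}$ and that every atom is witnessed, so the two $\gcd$ sets coincide. In effect you have inlined the paper's Lemma~\ref{prop:balls_equality} (even treating the diagonal-matched case a bit more explicitly than the paper does), so no further changes are needed.
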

We first state a lemma, which gives a stronger stability result.
\begin{lemma}
	\label{prop:balls_equality}
	Suppose that $\tau>0$ satisfies $2\epsilon<\tau<\delta/3$. Then, for any $p \in \supp(\mu_{f\circ\gamma})$ and $q\in B(p,\epsilon)$,
	\begin{equation}
	\label{eq:equality_of_balls}
	\mu_S(B(q,\tau)) = \mu_{f\circ\gamma}(B(p,\tau)).
	\end{equation}
\end{lemma}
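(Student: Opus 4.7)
The plan is to exploit the bottleneck stability \eqref{eq:persistence_stability}, which gives $\bottleneck(\mu_S,\mu_{f\circ\gamma})\leq \Vert W\Vert_\infty\leq \epsilon$, together with the separation hypothesis. Concretely, fix $p\in\supp(\mu_{f\circ\gamma})$ and set $m=\mu_{f\circ\gamma}(\{p\})$. Since $\tau<\delta/3<\delta$, no other point of $\supp(\mu_{f\circ\gamma})$ lies in $B(p,\tau)$, and since $p$ is at distance at least $\delta$ from $\Delta$, $B(p,\tau)$ does not meet $\Delta$ either. Hence $\mu_{f\circ\gamma}(B(p,\tau))=m$, and it suffices to show $\mu_S(B(q,\tau))=m$.

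Pick an $\epsilon$-matching $M:(\supp(\mu_{f\circ\gamma})\cup\Delta)\to(\supp(\mu_S)\cup\Delta)$ realizing (up to an arbitrarily small slack) the bottleneck distance. First I would argue that the $m$ points of $\mu_S$ matched to the $m$ copies of $p$ all land in $B(q,\tau)$: each such point is at $\Vert\cdot\Vert_\infty$-distance at most $\epsilon$ from $p$, and $\Vert q-p\Vert_\infty\leq\epsilon$, so by the triangle inequality it lies within $2\epsilon<\tau$ of $q$. This gives the lower bound $\mu_S(B(q,\tau))\geq m$.

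For the upper bound, take any $s\in\supp(\mu_S)\cap B(q,\tau)$ and look at its preimage under $M$. If $s$ is matched to some $p'\in\supp(\mu_{f\circ\gamma})$, then
\begin{equation*}
\Vert p'-p\Vert_\infty\leq \Vert p'-s\Vert_\infty+\Vert s-q\Vert_\infty+\Vert q-p\Vert_\infty\leq \epsilon+\tau+\epsilon<2\tau<\tfrac{2\delta}{3}<\delta,
\end{equation*}
so by the separation hypothesis $p'=p$. If instead $s$ is matched to the diagonal, then $d(s,\Delta)\leq\epsilon$; but $s\in B(q,\tau)\subset B(p,\tau+\epsilon)$, and $p$ is at least $\delta$ from the diagonal, so $d(s,\Delta)\geq \delta-\tau-\epsilon>\delta-\tfrac{\delta}{3}-\tfrac{\delta}{6}=\tfrac{\delta}{2}>\epsilon$, a contradiction. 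Hence every point of $\supp(\mu_S)\cap B(q,\tau)$ is matched to (a copy of) $p$, so there are at most $m$ of them counted with multiplicity, which yields $\mu_S(B(q,\tau))\leq m$.

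The main thing to be careful about is treating multiplicities and the diagonal correctly: points with common geometric realization must be distinguished in the matching (as in the proof of Proposition~\ref{prop:distance_to_degenerate}), and one must explicitly rule out matchings to $\Delta$ to avoid under-counting. Once the two inequalities in the previous paragraph are in hand, \eqref{eq:equality_of_balls} follows immediately.
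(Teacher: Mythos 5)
Your proof is correct and follows essentially the same route as the paper: bottleneck stability gives an $\epsilon$-matching, and the separation $\delta$ combined with $2\epsilon<\tau<\delta/3$ forces the two counting inequalities through that matching. If anything, you are slightly more explicit than the paper in ruling out matches to the diagonal in the upper bound, which is a welcome bit of extra care rather than a deviation.
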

\begin{proof}[Proof (Lemma~\ref{prop:balls_equality})]
	For any $0<r<\delta$, the separation of points in the diagram of $f$ ensures that
	\begin{equation*}
	\mu_{f\circ\gamma}(B(p,r))=\mu_{f\circ\gamma}(\{p\}).
	\end{equation*}
	Hence, $\mu_{f\circ\gamma}(B(p,\tau)) = \mu_{f\circ\gamma}(\{p\})$. By stability of the persistence diagram of $f\circ\gamma$, \eqref{eq:persistence_stability} writes $\bottleneck(\mu_S, \mu_{f\circ\gamma})\leq \Vert S -f\circ \gamma\Vert<\epsilon$. Let $M:(D(f\circ\gamma)\cup\Delta)\rightarrow (D(S)\cup\Delta)$ be a bijection of cost less or equal than $\epsilon+u,\ u>0$. Then, $\Vert M(p)-p\Vert_\infty\leq \epsilon + u$. Hence,  $\mu_{f\circ\gamma}(\{p\}) \leq \mu_S(B(p,\epsilon))$. As $d(p,q)<\epsilon$ implies $B(p,\epsilon)\subset B(q,2\epsilon)$, we get $\mu_S(B(p,\epsilon))\leq \mu_S(B(q,2\epsilon))\leq \mu_S(B(q,\tau))$.
	
	For the inequality in the other direction, by the same argument using stability as above, for any $u>0$, $\mu_S(B(q,\tau))\leq \mu_{f\circ\gamma}(M^{-1}(B(q,\tau)))\leq \mu_{f\circ\gamma}(B(q,\tau+\epsilon))\leq \mu_{f\circ\gamma}(B(p,\tau+2\epsilon))$.
	Since $\tau + 2\epsilon \leq 2\tau \leq \delta$, by \eqref{eq:equality_of_balls} with $r=\tau$ and $r=\tau+2\epsilon$ respectively, $\mu_{f\circ\gamma}(B(p,\tau + 2\epsilon))= \mu_{f\circ\gamma}(\{p\})= \mu_{f\circ\gamma}(B(p,\tau)).$
\end{proof}

\begin{proof}[Proof (Proposition~\ref{prop:N_stability})]
	Consider $q\in \supp(\mu_S)\setminus \Delta_\tau$. By stability of the persistence diagram, there is $p\in \supp(\mu_{f\circ\gamma})$, such that $q\in B(p,\epsilon)$. Proposition~\ref{prop:balls_equality} states that $\mu_S(B(q,\tau)) = \mu_{f\circ\gamma}(B(p,\tau)),$ so
	that $N(f\circ\gamma)$ divides $\hat{N}(S,\tau)$.
	For any $p\in \supp(\mu_{f\circ\gamma})$, by Proposition~\ref{prop:balls_equality},  $\mu_{f\circ\gamma}(B(p,\tau))=\mu_{S}(B(p,\tau))$, so that $\hat{N}(S,\tau)$ divides $N(f\circ\gamma)$.
	Thanks to the non--degeneracy of $f$ and by Proposition~\ref{prop:periodic_multiplicity}, we conclude that $\hat{N}(S,\tau)=N(f\circ\gamma)=N.$
\end{proof}

\begin{figure}
	\centering
	\includegraphics[width=0.65\textwidth]{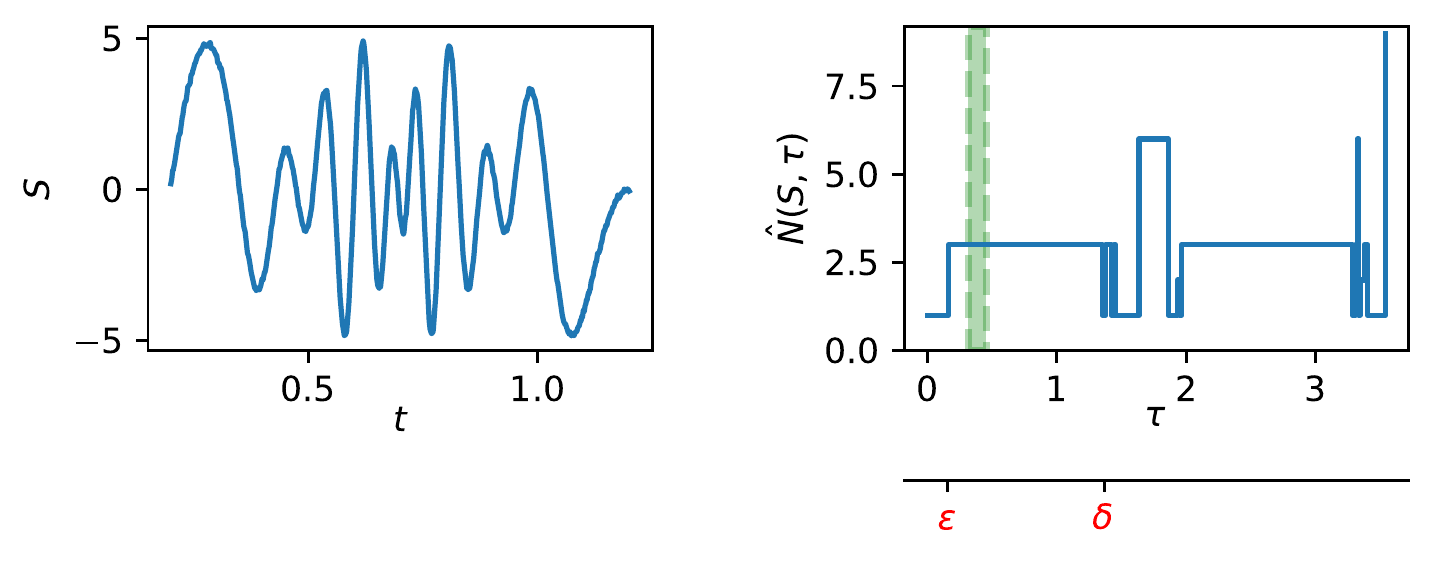}
	\caption{On the left, the graph of a sample from $S = f\circ\gamma + W$, with $\gamma(1)=3$, $\delta_f=1.37$ and $\Vert W\Vert_\infty\leq \epsilon=0.16$. On the right, the graph of the piecewise-linear function $\tau\mapsto \hat{N}(S,\tau)$. Marked in green is the guarantee from Proposition~\ref{prop:N_stability}.}
	\label{fig:N_stability_estimation}
\end{figure}
Figure~\ref{fig:N_stability_estimation} shows that the proposed estimator is correct, $\hat{N}(S,\tau)=3$, for $\tau\in [0.16, 1.35]$, which almost covers $[\epsilon,\delta]$. The guarantee given by Proposition~\ref{prop:N_stability} is much more pessimistic.

We propose to apply Proposition~\ref{prop:N_stability} in a random setting, with $(W_t)_{t\in[0,1]}$ being a collection of Gaussian random variables. One case of practical interest in signal processing is that of white noise. However, we cannot treat it in the continuous setting. We refer the reader to Appendix~\ref{app:sampled_signal}, where we first discretize the signal and then devise guarantees.
Instead, we propose to examine here the case of a regular Gaussian process. While the noise is now no longer bounded, we can calculate the probability that it remains bounded and that is the basis of our guarantee. We consider regular processes - that is, processes, which have a differentiable covariance function~\cite[section 4.3]{azais_level_2009}.

\begin{proposition}
	\label{thm:gaussian_correctness}
	Consider $S = f\circ\gamma + W(t)$, with $(W_t)_{t\in[0,1]}$ a Gaussian process with covariance function
	\begin{equation}
	\label{eq:noise_covariance}
	\Gamma(t)=\sigma^2\exp\left(-\frac{t^2}{2l^2}\right).
	\end{equation}
	Then, for any $\tau \leq \delta/3$,
	\begin{equation}
	\label{eq:corollary_azais_bound}
	P(\hat{N}(S, \tau) = N) \geq 1 - \left(\frac{1}{l^2\pi}\exp\left(\tfrac{-\kappa^2}{2}\right) + 2\phi\left(-\kappa\right)\right),
	\end{equation}
	where $\kappa = \frac{\tau}{2\sigma}$ and $\phi$ is the cumulative distribution function of a normal random variable.
\end{proposition}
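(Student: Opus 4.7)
The plan is to reduce the probabilistic statement to the deterministic conclusion of Proposition~\ref{prop:N_stability} on the event that the noise is small, and then to bound the probability of the complementary event using Rice's formula for the expected number of level crossings of a regular stationary Gaussian process.

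First, I would observe that whenever $\|W\|_\infty < \tau/2$, the choice $\epsilon := \|W\|_\infty$ satisfies both $\epsilon < \delta/6$ and $2\epsilon < \tau \leq \delta/3$, so Proposition~\ref{prop:N_stability} applies and $\hat{N}(S,\tau) = N$ on this event. Consequently
\begin{equation*}
P(\hat{N}(S,\tau) = N) \geq 1 - P\bigl(\sup_{t \in [0,1]} |W_t| \geq \tau/2\bigr),
\end{equation*}
and the remaining task is to bound the right-hand side for $u := \tau/2$.

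For this I would use the classical level-crossing argument: if $\sup_{t\in[0,1]} W_t \geq u$ but $W_0 < u$, then a continuous sample path of $W$ must upcross the level $u$ on $(0,1]$. By the symmetry of the centered Gaussian process the analogous bound holds for the event $\inf_t W_t \leq -u$, and a union bound yields
\begin{equation*}
P\bigl(\sup_t |W_t| \geq u\bigr) \leq 2P(W_0 \geq u) + 2\,E[N^+_u] = 2\phi(-\kappa) + 2\,E[N^+_u],
\end{equation*}
where $N^+_u$ counts the upcrossings of $u$ on $[0,1]$. Since the covariance $\Gamma$ is smooth, Kolmogorov's continuity theorem provides a $C^1$ modification of $W$ and Rice's formula applies:
\begin{equation*}
E[N^+_u] = \frac{1}{2\pi}\sqrt{-\Gamma''(0)/\Gamma(0)}\,\exp\bigl(-u^2/(2\sigma^2)\bigr).
\end{equation*}
Evaluating with $\Gamma''(0) = -\sigma^2/l^2$ and $u^2/(2\sigma^2) = \kappa^2/2$ produces a term of the form $C(l)\exp(-\kappa^2/2)$ which, combined with the marginal tail, gives the claimed inequality~\eqref{eq:corollary_azais_bound}.

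The main obstacle is the rigorous transition from a continuous random process to a finite crossing count: one must verify that $W$ admits a $C^1$ modification and that the regularity hypotheses on $\Gamma$ guarantee the applicability of Rice's formula, and then carefully track multiplicative constants (factors of $2$ coming from up- versus down-crossings and from the symmetric levels $\pm u$) to recover the precise coefficient $1/(l^2\pi)$ in front of $\exp(-\kappa^2/2)$.
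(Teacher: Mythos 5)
Your proof takes essentially the same route as the paper: the paper also reduces to the event $\{\Vert W\Vert_\infty \leq \tau/2\}$ via Proposition~\ref{prop:N_stability}, normalizes $Z=W/\sigma$, uses symmetry to reduce to a one--sided supremum, and then invokes \cite[Proposition 4.1]{azais_level_2009}, which is precisely the ``marginal tail plus expected number of upcrossings'' bound that you rederive from Rice's formula; the regularity of the Gaussian covariance~\eqref{eq:noise_covariance} is what makes that proposition (equivalently, Rice's formula for a $C^1$ version of $W$) applicable. One remark on the coefficient you flag as the remaining obstacle: your computation yields $2E[N^+_u]=\frac{1}{\pi l}\exp(-\kappa^2/2)$, and the paper's own intermediate step gives the same thing, since it computes $r_{1,1}(t,t)=1/l^2$ and the cited bound involves $\int_0^1\sqrt{r_{1,1}(t,t)}\,\diff t = 1/l$; the coefficient $\frac{1}{l^2\pi}$ appearing in~\eqref{eq:corollary_azais_bound} therefore seems to omit the square root, so your inability to recover exactly $1/(l^2\pi)$ is not a gap in your argument but an apparent inconsistency in the stated constant, and your derivation in fact supports the bound with $1/(l\pi)$ in its place.
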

\begin{proof}
	Proposition~\ref{prop:N_stability} states that $\hat{N}(S, \tau) = N$ whenever $S$ is $\tfrac{\tau}{2}$--close to $f\circ\gamma$, what translates to the following inclusion between the events
	\begin{equation*}
	\{\Vert f\circ\gamma-S\Vert_\infty \leq \tfrac{\tau}{2}\} \subseteq \{\hat{N}(S, \tau) = N\}.
	\end{equation*}
	Define $Z=\frac{W}{\sigma}$, where $\sigma = \sqrt{\Gamma(0)}$. Then,
	\begin{align*}
	P(\Vert f\circ\gamma-S\Vert_\infty \leq \tfrac{\tau}{2})
	&= P\left(\sup_{t\in [0,1]}\vert W_t\vert \leq \tfrac{\tau}{2}\right) \\
	&= P\left(\sup_{t\in [0,1]}\vert Z_t \vert \leq \tfrac{\tau}{2\sigma}\right).
	\end{align*}
	Set $u=\frac{\tau}{2\sigma}$. Since $\sup_{t\in [0,1]} \vert Z_t \vert = \max(\sup_{t\in [0,1]} Z_t,\  \sup_{t\in[0,1]}-Z_t)$, the events $\{\sup_{t\in [0,1]} Z_t>u\}$ and $\{\sup_{t\in [0,1]} -Z_t > u\}$ are a cover of $\sup_{t\in [0,1]} \vert Z_t \vert >u$.
	\begin{align*}
	P\left(\sup_{t\in [0,1]}\vert Z_t \vert \leq \tfrac{\tau}{2\sigma}\right) &= 1- P(\vert \sup_{t\in [0,1]} Z_t \vert > u), \\
	&\geq 1- (P(\sup_{t\in [0,1]} Z_t > u) + P(\sup_{t\in[0,1]}(-Z_t)> u))\\
	&\geq 1- 2P(\sup_{t\in [0,1]} Z_t > u),
	\end{align*}
	where the ultimate inequality follows from the fact that $Z(t)$ and $-Z(t)$ have equal distributions.
	Let $r(s_1,s_2) = \mathbb{E}[Z_{s_1} Z_{s_2}]$ and $r_{1,1}(s, t) = \tfrac{\partial^2}{\partial s_1 \partial s_2}r(s,t)$.
	By \cite[Proposition 4.1]{azais_level_2009},
	\begin{equation}
	\label{eq:azais_bound}
	P\left(\sup_{t\in [0,1]} Z_t > u\right) \leq \frac{\exp(-u^2/2)}{2\pi}\int_0^1 \sqrt{r_{1,1}(t,t)}\diff t + 1 - \phi(u).
	\end{equation}
	Here, $r(s_1, s_2) = \Gamma(s_1-s_2) = \exp\left(-\frac{(s_1-s_2)^2}{2l^2}\right)$, so computing the derivative of the covariance function
	\begin{equation*}
	r_{1,1}(s, t) = \frac{\partial^2}{\partial s_1 \partial s_2}r(s,t) = \frac{s-t+1}{l^2}\exp\left(-\frac{(s-t)^2}{2l^2}\right).
	\end{equation*}
	For $s=t$,
	\begin{equation*}
	r_{1,1}(t,t) = \frac{1}{l^2}.
	\end{equation*}
	Computing the right--hand side of \eqref{eq:azais_bound} and putting it all together, we obtain
	\begin{equation*}
	P(\hat{N}(S, \tau) = N) \geq 1 - \left(\frac{1}{l^2\pi}\exp\left(\tfrac{-\kappa^2}{2}\right) + 2\phi\left(-\kappa\right)\right).
	\end{equation*}
\end{proof}
Let us analyze the bound \eqref{eq:corollary_azais_bound}. The parameter $l$ in~\eqref{eq:noise_covariance} quantifies the horizon of dependence of the stochastic process. The bound is increasing in $l$, implying that a long dependence (large $l$) yields a simpler structure for the method. In the limit $l\rightarrow \infty$, the expression on the right, $P(\hat{N}(S, \tau) = N) \geq 1 - 2\phi\left(-\kappa\right)$, is the the same as the probability that $\Vert (f\circ\gamma)(0)-S(0)\Vert_\infty<\tau/2$. On the other hand, when the interactions become short-term ($l$ small), the bound becomes trivial: there is a constant $l_0$ depending on $\kappa$, such that for $l\in\rbrack 0, l_0\rbrack$, the bound is 0. The other governing parameter is $\kappa$ - the ratio between the chosen scale $\tau$ and the standard deviation $\sigma$ of the process.
So, since $\tau\leq \frac{\delta}{3}$ has to be smaller than a fraction of the separation of $f$, $\kappa\leq \frac{\delta}{6\sigma}$ is also bounded. Regardless how good of a method to choose $\tau$ we have, we obtain the best lower--bound of the probability of correctness only in terms of the quantities characteristic of the signal and the stochastic process. Experimental observations regarding $l$ are discussed in Section~\ref{sec:numerical_results}.

\section{Inference of an odometric sequence}
\label{sec:method_odometry}
In this section, we propose an odometric sequence for Problem~\ref{prob:odometry}, which we take as a subset of the local minima of the signal, chosen using the estimator of $N$ proposed in Section~\ref{sec:method_multiplicity_points}. We first examine the noiseless case $f\circ\gamma$, with $\gamma:[0,1]\rightarrow [0,N]$. Let $\mathcal{C}$ be the set of local minima of $f\circ\gamma$. If $K$ is the number of local minima in a single period of $f$, then $\vert \mathcal{C}\vert = NK$. By picking every $K$-th minimum in $\mathcal{C}$, we obtain an odometric sequence. More precisely, we lexicographically index the ordered minima $\mathcal{C}=(t_{n,k})_{\substack{n=1,\ldots,N\\k=1,\ldots,K}}$ and then, for any $k=1,\ldots,K$, and $n=2,\ldots, N$, we have $\gamma(t_{n,k}) = \gamma(t_{n-1,k}) + 1$.
\begin{remark}
	Notice that since $N$ is part of the data in Problem~\ref{prob:odometry}, we do not even need to assume that $f$ is non--degenerate. If we wanted to use the same technique with $N$ unknown, we would need to estimate it and the non--degeneracy of $f$ is necessary for Proposition~\ref{prop:N}.
\end{remark}

\begin{figure}
	\centering
	\includegraphics[width=0.33\textwidth]{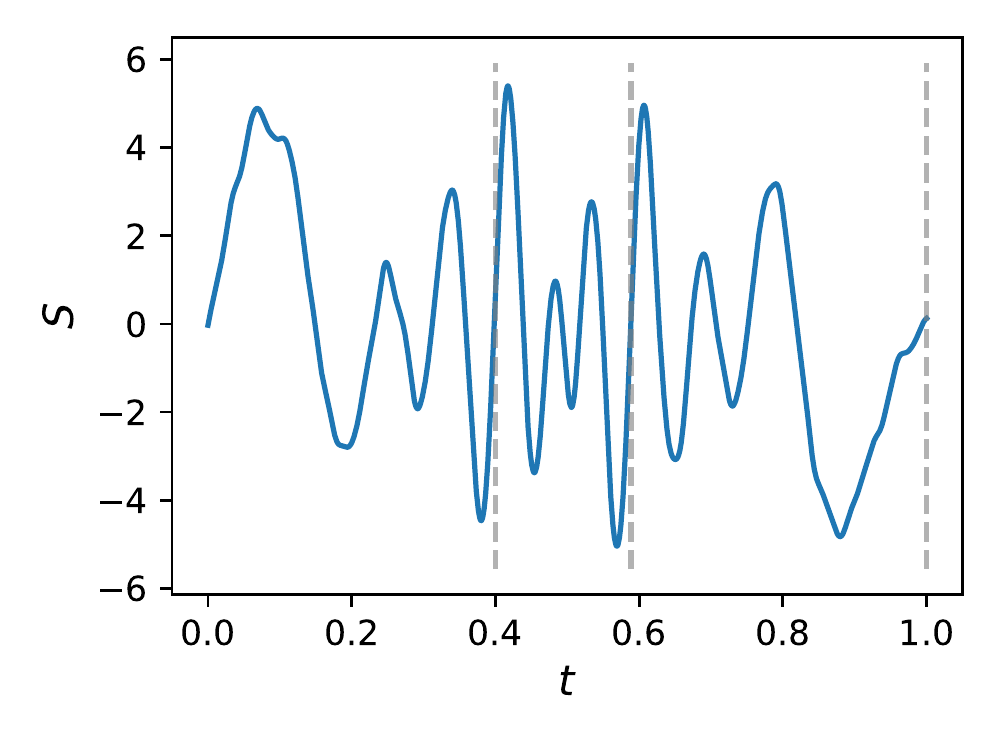}\hfill
	\includegraphics[width=0.33\textwidth]{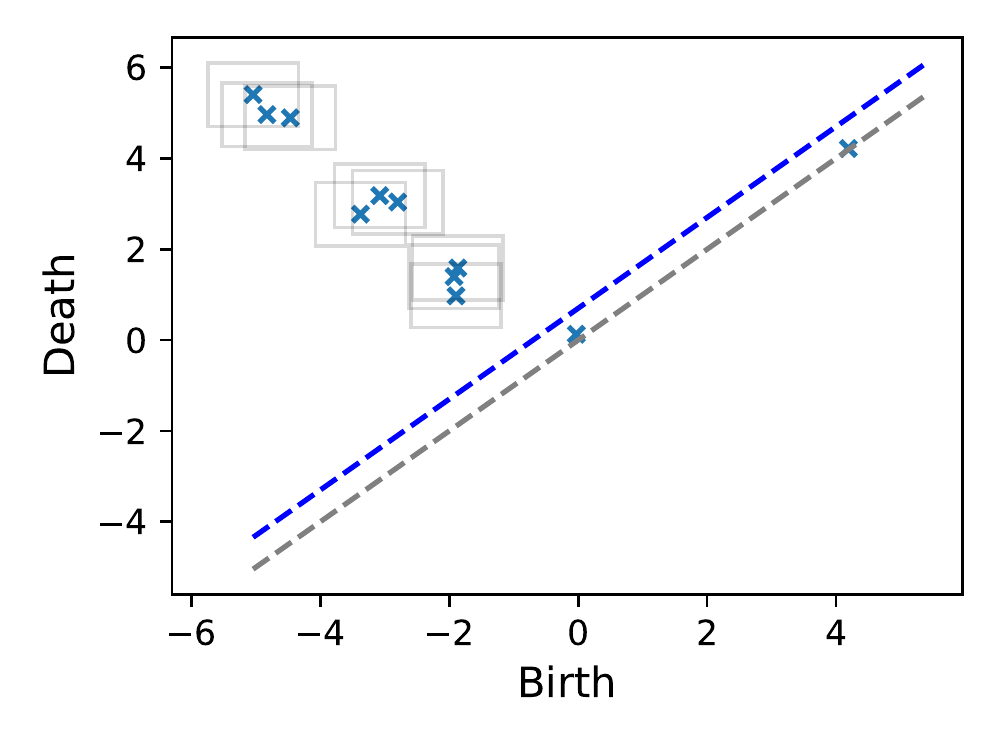}\hfill
	\includegraphics[width=0.33\textwidth]{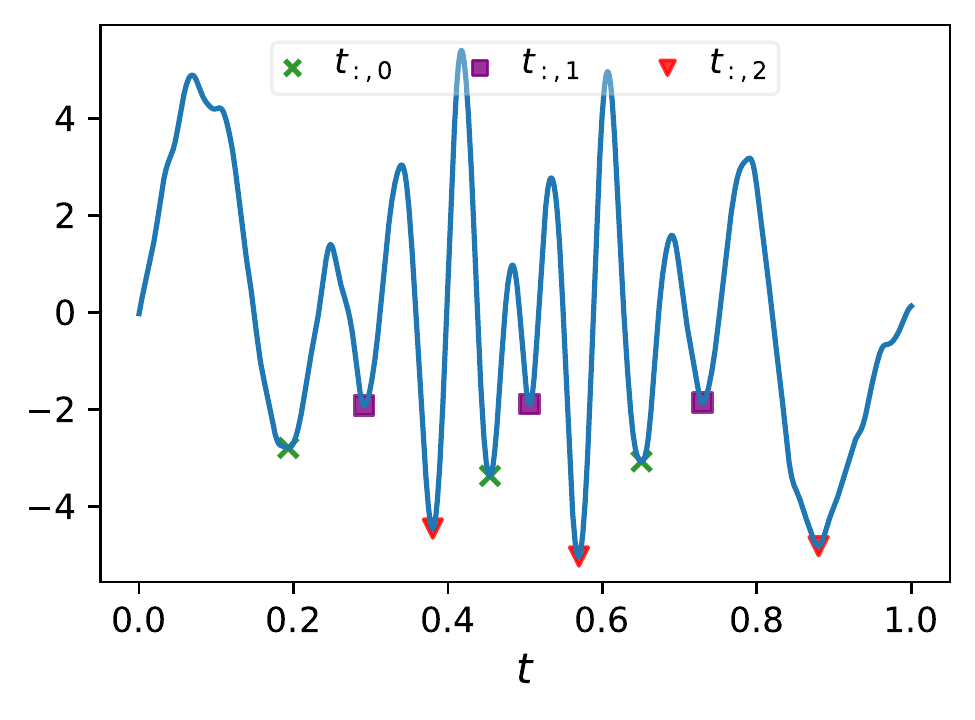}
	\caption{Example of an odometric sequence. On the left, an observed signal, with $N=3$. In the center figure, its persistence diagram, with $\Vert\cdot\Vert_\infty$-balls of radius 0.7, marking the sets whose measure is evaluated. On the right, the observed signal with the odometric sequences stemming from persistent local minima. For each $k=0, 1,2$, $(\hat{t}_{n,k})_{n=1}^3$ is an odometric sequence.}
	\label{fig:odometry}
\end{figure}

Let us consider the case of a signal corrupted by noise $S=f\circ\gamma + W$. While the location of the local minima of $S$ is slightly different than those of $f\circ\gamma$, we can expect to be able to find a set of minima, in one-to-one correspondence with minima of $\mathcal{C}$. Thus, the odometric property will not be exactly satisfied. An example is shown in Figure~\ref{fig:odometry}.
We observe a signal, where $\gamma(1)=3$. Note that two points lie between the diagonal and it's offset, marked by the dashed lines. These are not used in the estimation of $N$ and the corresponding local minima are ignored when selecting the odometric sequences, shown on the right.
Proposition~\ref{thm:odometry} makes precise the guarantees on the existence of the sequence. It also states that it satisfies the odometric property, up to a constant which depends on $f,\tau$ and $\epsilon$.

\begin{proposition}
	\label{thm:odometry}
	Let $\tau>0$ and $\mathcal{\hat{C}}_\tau$ be the set of local minima of $S$, corresponding to points in the diagram with persistence more than $\tau$.
	If $\tau\in \rbrack 2\epsilon, \delta/3\lbrack$, then
	\begin{equation*}
	\vert \mathcal{\hat{C}}_\tau\vert = NK.
	\end{equation*}
	In addition, if we order the minima $\mathcal{\hat{C}}_\tau = \{t_{1,1},\ldots t_{1,K}, t_{2,1},\ldots, t_{N,K}\}$, then
	\begin{equation*}
	\vert\gamma(\hat{t}_{n,k}) - \gamma(\hat{t}_{n-1,k}) - 1 \vert \leq 2R(\tau + 2\epsilon), 
	\end{equation*}
	where 
	\begin{equation*}
	R(\nu) = \sup_{x\in \gamma(\mathcal{C})}\ \inf_{r>0}\{r\mid f(x+r)-f(x)>\nu,\ f(x-r)-f(x)>\nu\}
	\end{equation*}
	is a constant depending on $f$ and independent of $\gamma$ and $N$.
\end{proposition}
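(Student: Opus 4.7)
The plan is to pair each local minimum $t_{n,k}$ of $f\circ\gamma$ with a nearby local minimum $\hat{t}_{n,k}^*$ of $S$ belonging to $\hat{\mathcal{C}}_\tau$ by constructing an explicit \emph{basin} around each $t_{n,k}$. Write $v_{n,k}=(f\circ\gamma)(t_{n,k})$ and $x_{n,k}=\gamma(t_{n,k})$, and define
$$C_{n,k}=\text{the connected component of }t_{n,k}\text{ in }(f\circ\gamma)^{-1}((-\infty,v_{n,k}+\tau+2\epsilon)).$$
Directly from the definition of $R$ applied at $x_{n,k}\in\gamma(\mathcal{C})$ with $\nu=\tau+2\epsilon$, the function $f$ exceeds $v_{n,k}+\tau+2\epsilon$ simultaneously at $x_{n,k}+r$ and $x_{n,k}-r$ for some $r\leq R(\tau+2\epsilon)$, whence $\gamma(C_{n,k})\subseteq[x_{n,k}-R(\tau+2\epsilon),\,x_{n,k}+R(\tau+2\epsilon)]$.

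The first step is to show that the basins are pairwise disjoint. Suppose for contradiction that $C_{n,k}\cap C_{n',k'}\neq\emptyset$ for $(n,k)\neq(n',k')$ and, without loss of generality, $v_{n,k}\leq v_{n',k'}$. Then $C_{n,k}\cup C_{n',k'}$ is a connected subset of $(f\circ\gamma)^{-1}((-\infty,v_{n',k'}+\tau+2\epsilon))$ containing two distinct local minima of $f\circ\gamma$; by the elder rule the diagram point associated to $t_{n',k'}$ has persistence at most $\tau+2\epsilon$. The separation hypothesis forces every diagram point to have persistence at least $2\delta$, but $\tau<\delta/3$ together with $\epsilon<\tau/2$ yields $\tau+2\epsilon<2\delta/3<2\delta$, a contradiction.

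Next, I construct a persistent local minimum of $S$ inside each basin. Let $\hat{C}_{n,k}$ be the connected component of $t_{n,k}$ in $S^{-1}((-\infty,v_{n,k}+\tau+\epsilon))$. Since $f\circ\gamma=v_{n,k}+\tau+2\epsilon$ on $\partial C_{n,k}$ and $\|W\|_\infty<\epsilon$, we have $S>v_{n,k}+\tau+\epsilon$ on $\partial C_{n,k}$, which prevents $\hat{C}_{n,k}$ from crossing that boundary, so $\hat{C}_{n,k}\subseteq C_{n,k}$. The global minimum $\hat{t}_{n,k}^*$ of $S$ on $\overline{\hat{C}_{n,k}}$ is a local minimum of $S$ with value at most $v_{n,k}+\epsilon$, and because $\hat{C}_{n,k}$ remains an isolated connected component of the sublevel filtration of $S$ for every threshold below $v_{n,k}+\tau+\epsilon$, the death value of $\hat{t}_{n,k}^*$ is at least $v_{n,k}+\tau+\epsilon$. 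This gives $\pers(\hat{t}_{n,k}^*)>\tau$, placing $\hat{t}_{n,k}^*$ in $\hat{\mathcal{C}}_\tau$.

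To finish, I match cardinalities. Stability yields a matching between $\mu_S$ and $\mu_{f\circ\gamma}$ of cost at most $\|W\|_\infty<\epsilon$; points of $\mu_S$ matched to the diagonal have persistence at most $2\epsilon<\tau$, while those matched to $\supp(\mu_{f\circ\gamma})$ keep persistence at least $2(\delta-\epsilon)>\tau$, so the number of points of $\mu_S$ with persistence $>\tau$ equals $|\mu_{f\circ\gamma}|=NK$, proving $|\hat{\mathcal{C}}_\tau|=NK$. Combined with the $NK$ distinct minima $\hat{t}_{n,k}^*$ exhibited above, this forces $\hat{\mathcal{C}}_\tau=\{\hat{t}_{n,k}^*\}$, and disjointness of the $C_{n,k}$ preserves the lexicographic ordering, so $\hat{t}_{n,k}=\hat{t}_{n,k}^*$. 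The conclusion then follows from $\gamma(t_{n,k})-\gamma(t_{n-1,k})=1$, $|\gamma(\hat{t}_{n,k})-x_{n,k}|\leq R(\tau+2\epsilon)$ and the triangle inequality. The most delicate point, I expect, is the persistence estimate in the third paragraph: identifying the death of $\hat{t}_{n,k}^*$ requires verifying that $\hat{C}_{n,k}$ does not prematurely merge with another component of the sublevel filtration of $S$, which rests both on the strict inequality $\|W\|_\infty<\epsilon$ and on the disjointness of the $C_{n,k}$ already established.
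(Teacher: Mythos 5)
Your proof is correct and follows the same overall strategy as the paper: establish $\vert\mathcal{\hat{C}}_\tau\vert=NK$ by bottleneck stability (your matching argument is essentially the content of Lemma~\ref{prop:balls_equality}), then localize each persistent minimum of $S$ in a region of $\gamma$--width at most $R(\tau+2\epsilon)$ around the corresponding true minimum, and finish with the triangle inequality. The difference is in the localization bookkeeping: the paper works with the one--sided radii $R_k^\pm$ and intervals $I_{n,k}$, proving existence of a persistent minimum in each interval from boundary values and uniqueness inside each interval via the min--max--min Lemma~\ref{lemma:min_max_min}, whereas you define sublevel--set basins $C_{n,k}$, prove they are pairwise disjoint via the elder rule plus the separation $\delta$, exhibit one element of $\mathcal{\hat{C}}_\tau$ per basin, and obtain uniqueness purely from the cardinality count $NK$. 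This buys you a proof that dispenses with Lemma~\ref{lemma:min_max_min}, and your explicit basins make the containment $\gamma(\hat{t}_{n,k})\in[x_{n,k}-R(\tau+2\epsilon),\,x_{n,k}+R(\tau+2\epsilon)]$ cleaner than the paper's implicit $I_{n,k}$. One point to tighten: in the disjointness step you assert that \emph{the point associated to $t_{n',k'}$} dies by level $v_{n',k'}+\tau+2\epsilon$; this is literally correct only when $v_{n,k}<v_{n',k'}$ strictly. The tied case $v_{n,k}=v_{n',k'}$ is not exotic here --- by periodicity of $f$, minima with the same index $k$ in different periods have identical values --- but the argument survives: since the two minima lie in one component of the sublevel set at that level, at most one of their diagram points can still be alive there, and whichever dies has the same birth value, so its persistence is still at most $\tau+2\epsilon<2\delta$, giving the same contradiction. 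Similarly, your persistence estimate for $\hat{t}^*_{n,k}$ should pick a fixed representative in case of ties among minima of $S$ (or note that ties occur with probability zero under continuous noise); with these minor adjustments the argument is complete.
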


Before we proceed to the proof, we recall a result from the theory of persistence: between any two consecutive local minima of $f$, there is a local maximum, larger than those minima by at least $\delta$. We provide a proof in Appendix~\ref{app:min_max_min_lemma}.
\begin{lemma}
	\label{lemma:min_max_min}
	Let $x_1< x_2$ be two, consecutive local minima of $f$ and suppose that the separation constant \eqref{cond:separation} for $\mu_f$ is $\delta>0$. Then,
	\begin{equation*}
	[\min(f(x_1), f(x_2)), \max(f(x_1), f(x_2)) + 2\delta] \subset f([x_1, x_2]).
	\end{equation*}
\end{lemma}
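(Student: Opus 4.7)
The plan is to reduce the interval-containment claim to a one-line inequality on the intermediate maximum value, and then to read that inequality off of a specific point of the persistence diagram via the elder rule.

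First, I would establish structurally that $f$ has exactly one local maximum $y$ in $(x_1, x_2)$, and that $f([x_1, x_2]) = [\min(f(x_1), f(x_2)), f(y)]$. Because $x_1, x_2$ are consecutive local minima, no local minimum of $f$ lies in the open interval $(x_1, x_2)$; two local maxima there would, by continuity, force an intervening local minimum, contradicting consecutiveness. Since $f$ has finitely many critical points and $x_1, x_2$ are local minima at the endpoints, this forces $f$ to be strictly increasing on $[x_1, y]$ and strictly decreasing on $[y, x_2]$, whence the image description follows from the intermediate value theorem.

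Next, I would identify a concrete point of $\mu_f$ whose coordinates encode $y$. Running the sublevel-set filtration as in Algorithm~\ref{alg:persistence}, both connected components containing $x_1$ and $x_2$ are present once the level exceeds $\max(f(x_1), f(x_2))$, and they merge exactly at the level $f(y)$, which is the unique local maximum between them. The elder rule then pairs $y$ with the younger local minimum, i.e.\ the one with the larger birth value, so that the diagram contains the point
\[
p = \bigl(\max(f(x_1), f(x_2)),\, f(y)\bigr) \in \supp(\mu_f).
\]
If $f(x_1) = f(x_2)$, any tie-breaking gives the same birth coordinate, so the conclusion is unchanged.

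Finally, I would apply the separation hypothesis. Since the $\Vert\cdot\Vert_\infty$-distance of any diagram point $(b,d)$ to the diagonal $\Delta$ equals $(d-b)/2$, the bound $d(p, \Delta) \geq \delta$ rewrites as $f(y) \geq \max(f(x_1), f(x_2)) + 2\delta$. Combined with the image description from the first step, this yields the claimed inclusion. The main subtlety lies in the second step: one has to verify that the point of $\mu_f$ produced by the elder-rule merge at $y$ really has birth coordinate $\max(f(x_1), f(x_2))$ and death coordinate $f(y)$; once that is in hand, the rest is immediate from the intermediate value theorem and the separation assumption.
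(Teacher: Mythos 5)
Your step 1 (a unique local maximum $y$ in $\rbrack x_1,x_2\lbrack$ and $f([x_1,x_2])=[\min(f(x_1),f(x_2)),f(y)]$) and your step 3 (converting $d(p,\Delta)\geq\delta$ into $d-b\geq 2\delta$) are both fine. The gap is in step 2: the point $p=(\max(f(x_1),f(x_2)),f(y))$ need not belong to $\supp(\mu_f)$. The elder rule pairs $y$ with the \emph{global} minimum of the younger of the two sublevel-set components that merge at level $f(y)$, and those components are not confined to $[x_1,x_2]$. Concretely, take $f$ on $\Se^1$ with local minima of values $0,2,1$ at $p_0<p_1<p_2$, separated by local maxima of values $3$ (between $p_0$ and $p_1$), $5$ (between $p_1$ and $p_2$) and $10$ (closing the circle). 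For the consecutive pair $x_1=p_1$, $x_2=p_2$ one has $y$ with $f(y)=5$ and $\max(f(x_1),f(x_2))=2$, but the diagram is $\{(0,10),(2,3),(1,5)\}$: the minimum $p_1$ has already died at level $3$ against a maximum \emph{outside} $[x_1,x_2]$, so the merge at $y$ creates $(1,5)$, not $(2,5)$. Your separation argument then only yields $f(y)\geq \min(f(x_1),f(x_2))+2\delta$, which is strictly weaker than the claim. (On $\Se^1$ there is a second failure mode: the two components may already be connected ``the other way around'' below level $f(y)$, in which case the merge at $y$ creates no $H_0$ point at all.)

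The paper avoids this by arguing in the opposite direction: it starts from the diagram point whose birth is the \emph{higher} of the two minimal values, say $(f(x_2),d)$ with $f(x_1)<f(x_2)$; separation gives $d\geq f(x_2)+2\delta$ directly, and a connected-component argument (via the pairing maximum $x_d$ and the component $I$ of $f^{-1}(\rbrack-\infty,d\rbrack)$ containing it) shows that $x_1$ and $x_2$ cannot lie in a common component of $f^{-1}(\rbrack-\infty,d\rbrack)$ restricted to $[x_1,x_2]$, so some $c\in[x_1,x_2]$ satisfies $f(c)\geq d$. If you want to keep your structure, you would need to replace the identification of the merge point at $y$ by exactly this kind of argument: bound $f(y)$ from below by the death value of the point born at the higher minimum, rather than by reading $f(y)$ off as a death coordinate.
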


\begin{proof}[Proof of Proposition~\ref{thm:odometry}]
	When $2\epsilon<\tau<\delta/3$,
	\begin{align*}
	\vert\mathcal{\hat{C}}_\tau\vert &= \mu_S(\R^2\setminus\Delta_\tau)\\
	&= \mu_{f\circ\gamma}(\R^2\setminus \Delta)\\
	&= \vert \mathcal{C}\vert\\
	&= NK,
	\end{align*}
	where the second inequality follows from Lemma~\ref{prop:balls_equality}. This yields the first statement.
	\\	
	For the second part, let us start by defining $x_k = \gamma(t_{1,k}) \in \lbrack 0, 1\rbrack.$
	By definition, $\gamma(t_{n,k}) - \gamma(t_{n-1,k}) = x_k + (n-1) - (x_k + (n-2)) = 1$, so	
	\begin{align*}
	\vert\gamma(\hat{t}_{n,k}) - \gamma(\hat{t}_{n-1,k}) - 1\vert
	&= \vert\gamma(\hat{t}_{n,k}) - \gamma(\hat{t}_{n-1,k}) - (\gamma(t_{n,k}) - \gamma(t_{n-1,k}))\vert\\
	&\leq \vert\gamma(\hat{t}_{n,k}) - \gamma(t_{n,k})\vert + \vert\gamma(\hat{t}_{n-1,k}) - \gamma(t_{n-1,k}))\vert.
	\end{align*}
	Let us introduce $R_k = \max (R_k^-, R_k^+)$, with
	\begin{equation}
	\label{eq:r_k_minus_plus}
	\begin{aligned}
	R_k^- &= \sup_r\{r \mid \forall y\in [x_k-r, x_k],\ f(y)\leq f(x_k) + \tau + 2\epsilon \},\\
	R_k^+ &= \sup_r\{r \mid \forall y\in [x_k, x_k+r],\ f(y)\leq f(x_k) + \tau + 2\epsilon\},
	\end{aligned}
	\end{equation}
	which upper--bounds how far one needs to look in either direction from $x_k$, so that the graph of $f$ leaves the interval $[f(x_k), f(x_k)+\delta]$. We now need to show that, for all $n,k$,
	\begin{equation*}
	\vert \gamma(\hat{t}_{n,k}) - \gamma(t_{n,k})\vert \leq \max\{R_k^+, R_k^-\}.
	\end{equation*}
	\begin{figure}
		\centering
		\includegraphics[width=0.99\textwidth]{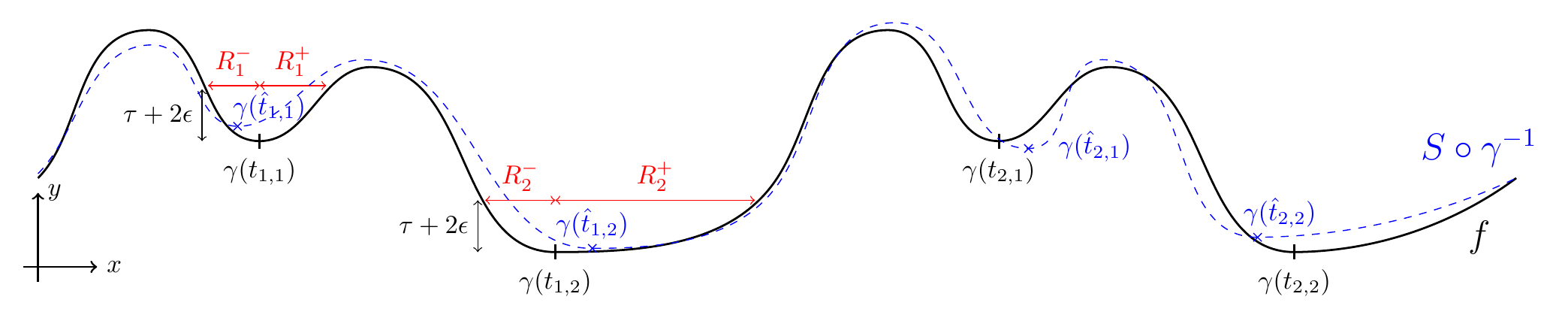}
		\caption{The figure shows an example of the graphs of $f$ and $S\circ\gamma^{-1}$, with local minima of both functions and $R_k^-, R_k^+$ from~\eqref{eq:r_k_minus_plus} illustrated.}
		\label{fig:landmark_stability}
	\end{figure}
	More specifically, we show that for any $n,k$, $\vert \mathcal{\hat{C}}_\tau \cap I_{n,k}\vert= 1$. Since there are exactly $\vert \mathcal{\hat{C}}_\tau\vert = NK$ intervals $I_{n,k}$, $\hat{t}_{n,k} \in I_{n,k}$. We illustrate this in Figure~\ref{fig:landmark_stability} and conclude by setting $R(\tau + 2\epsilon) = \max_{k}\{R_{k}^-, R_{k}^+\}.$
	\\
	We now show $\vert \mathcal{\hat{C}}_\tau \cap I_{n,k}\vert= 1$. Observe that since $S$ is continuous, it has a minimum $a^*$ over $I_{n,k}$. This minimum is not achieved at the extremities of $I_{n,k}$, since
	\begin{align*}
	S(\gamma^{-1}(\gamma(t_{n,k}) + R_k^+)) - S(t_{n,k}) \geq \tau + 2\epsilon + (W_{\gamma^{-1}(\gamma(t_{n,k}) + R_k^+)} + W_{t_{n,k}}) > \tau. \\
	S(\gamma^{-1}(\gamma(t_{n,k}) - R_k^-)) - S(t_{n,k}) \geq \tau + 2\epsilon + (W_{\gamma^{-1}(\gamma(t_{n,k}) - R_k^-)} + W_{t_{n,k}}) > \tau.
	\end{align*}
	The inequalities above also show that the $a^*$ has a persistence greater than $\tau$, so $a^*\in \mathcal{\hat{C}}_\tau$.
	\\
	Suppose now that two elements $a^* < b^*\in\mathcal{\hat{C}}_\tau$ are contained in the same interval $I_{n,k}$. Lemma~\ref{lemma:min_max_min} states that there is a local maximum $c^*\in \rbrack a^*, b^*\lbrack \subset I_{n,k}$. If $t_{n,k}< c^*$, $S(c^*) - S(a^*)>\delta$ contradicts the minimality of $R_k^+$ and therefore $a^*=b^*$. If $t_{n,k}>c$, we obtain a contradiction of the minimality of $R_k^-$ since $S(c^*) - S(b^*)>\delta$.
\end{proof}

As we would expect, the more strictly convex $f$ is around prominent local minima, and the smaller $\tau$ and $\epsilon$, the smaller $R_k$ is. It is interesting to note that the guarantee is not cumulative: more specifically, for any $0\leq n,m\leq N$, we have
\begin{equation*}
\vert \gamma(\hat{t}_{n,k}) - \gamma(\hat{t}_{m,k}) - (n-m)\vert \leq 2R(\tau+2\epsilon).
\end{equation*}

\section{Numerical experiments and applications}
We introduce more robust versions of $\hat{N}$ and we quantify their performance on synthetic signals. Then, we test the odometric sequence method on real, magnetic data. We measure the quality of that sequence using positioning data.

\subsection{Practical adaptations of $\hat{N}$}
\label{sec:pratical_adaptations}
We introduce $\hat{N}_c$, obtained by replacing in~\eqref{eq:N_hat} the number of points in neighborhoods by the sizes of some partitions.
Specifically, let $\Partition_\tau$ be a partition of the persistence diagram $D(S) \cup \Delta$ obtained via single--linkage hierarchical clustering with a scale-parameter $\tau$. We define
\begin{equation}
\label{eq:N_hat_clustering}
\hat{N}_c(S,\tau) = \gcd\{\mu_S(\partition)\mid \partition\in\Partition_\tau, \partition\cap\Delta=\emptyset\}.
\end{equation}
Let us now consider the map $h_S: \tau\mapsto \hat{N}_c(S,\tau)$, for $\tau\in \R_+^*$, with the convention that $\gcd(\emptyset)=1$. The map $h_S$ is piece--wise constant and, since the persistence of all points is bounded by $(\max(S)-\min(S))/2$, $\restr{h_S}{\rbrack(\max(S)-\min(S)), \infty\lbrack} =1$. Let $\mathcal{I}_n$ be the collection of maximal intervals, where $h_S(\tau)=n$. Finally, we set $\hat{N}_c^T$ to the value which is realized over the longest interval 
\begin{equation}
\hat{N}_c^T(S) = \argmax_{n>1} (\max_{I\in \mathcal{I}_n} \length(I)).
\end{equation}
To compute the single--linkage clustering, we use the scikit--learn library~\citep{scikit-learn}.

Let us motivate the introduction of $\hat{N}_c$.
Experiments show that points in the diagrams $\mu_S$ form non--overlapping clusters. The diameters of those clusters can be greater than the distance which separates them (when the condition $6\epsilon<\delta$ is not satisfied). In such a situation, the estimator~\eqref{eq:N_hat} will be incorrect, even if, to the human eye, the clusters are easily identifiable. 
In the proof of Lemma~\ref{prop:balls_equality}, we notice that if the upper bound of noise $\epsilon$ is smaller than the separation $\delta/3$, the points in the persistence diagram are partitioned into neighborhoods of size $\tau$. In particular, if two points in the diagram share a neighbor at distance no greater than $\tau$, their $\tau$-neighborhoods are equal due to the separation property satisfied by $f$. This is exactly the property satisfied by a partition produced by single--linkage clustering. Namely, that algorithm produces the coarsest partition $\Partition_\tau$ of $\R^2$ such that: for any $p_1,\,p_2\in D(S)$, if there exists $q\in D(S)$ such that $d(p_1,q)<\tau$ and $d(q, p_2)<\tau$, then there exists $\partition\in\Partition$ such that $p_1,p_2\in \partition$.
This motivates replacing $(B(p, \tau))_{p\in \supp(\mu_S\setminus\Delta_\tau)}$ with a partition $\Partition_\tau$ from a single-linkage algorithm, leading to an algorithm more robust to noise and to the choice of $\tau$.

The second estimator $\hat{N}_c^T$ allows us to avoid the selection of $\tau$: it takes the value for which $h_S$ is constant for the longest interval.
Suppose there exists $\tau_0>0$ such that $h_S(\tau_0)=N$ and that no point in $\partition_\tau$ is isolated. Then $N$ divides $h_S(\tau)$, for all $\tau\in\rbrack\tau_0,\tau_{\max}\lbrack$, where $\tau_{\max}$ is the smallest value of $\tau$ such that $\partition_\tau$ contains a single cluster, intersecting the diagonal.
By a similar argument as in Proposition~\ref{prop:N_stability}, $h_S(\tau) = N$ for $\tau \in \rbrack 2\epsilon, \delta/3\lbrack$. So, there exists $I\in\mathcal{I}_N$ such that $\rbrack 2\epsilon, \delta/3\lbrack\subset I$. Usually, this inclusion will be strict, because the clustering scheme is less sensitive to $\epsilon$.

The automatic choice of $\tau$ comes with some limitations. Since we set the domain of $h_S$ in a data--driven way to $\lbrack 0,(\max S - \min S)/2\rbrack$, we encounter problems when $\delta$ is small compared to the amplitude of the signal. It is particularly visible when the signal $f$ is close to a degenerate--function. Consider $f_r$ from Figure~\ref{fig:delta_degenerate_example}. Even in the noiseless setting, $\delta \leq \Vert(-1-r, 1+r) - (-1, 1)\Vert_\infty=r$, so $h_{\restr{f_r}{[0,1]}}(\tau) = 1$ for $\tau<r$ and $h_{\restr{f_r}{[0,1]}}(\tau) = 2$ for $r\leq \tau\leq 2(1+r)$. Therefore, $\hat{N}_c^T(\restr{f_r}{[0,1]}) = 1$ if and only if $r>1$. On the other hand, such behavior is to be expected from any method of automating the choice of $\tau$: estimating $N$ amounts to looking for symmetries in the diagram of $S$, so if the true symmetry is present only for a narrow range of scales $\tau$, it will be harder to capture that scale correctly and automatically.

Ultimately, we are often interested in computing the odometric sequence for which a value of $\tau$ is needed. We take $\tau$ to be the middle of the longest interval on which $h_S$ is constant, so that $\hat{N}_c(S,\tau) = \hat{N}_c^T(S)$.

\subsection{Estimating $N$ on synthetic data}
\label{sec:numerical_results}
\begin{figure}
	\centering
	\includegraphics[width=0.99\textwidth]{./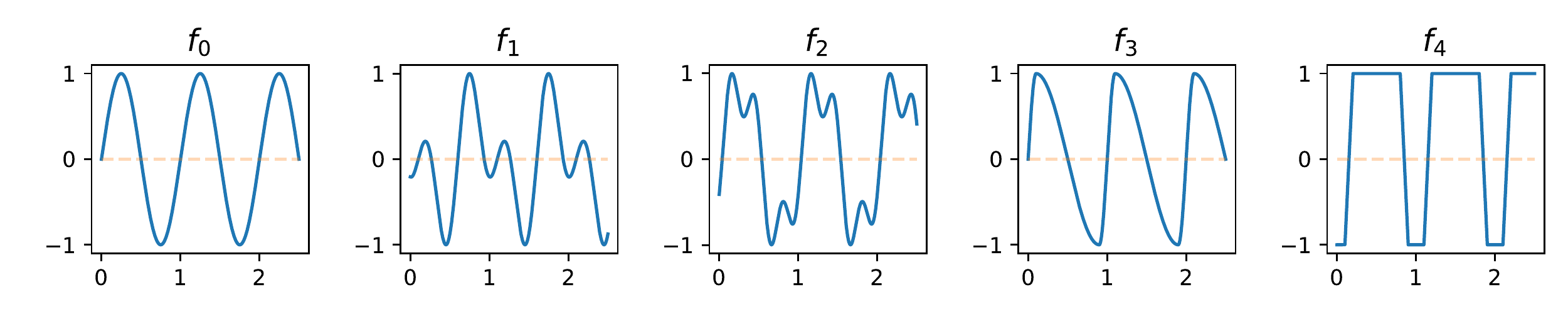}
	\caption{The graphs of five one--periodic functions over $[0, 2.5]$.}
	\label{fig:template_functions}
\end{figure}
We choose five template functions $f_0,\ldots, f_4$ and we picture their periods in Figure~\ref{fig:template_functions}. The functions $f_0,\, f_3$ and $f_4$ all have a single pair of extrema per period, while $f_1$ and $f_2$ have two and three pairs per period respectively. All the functions are normalized so that their range is $[-1,1]$, but their separation parameters $\delta$ differ: $\delta_0= 1=\delta_3=\delta_4$, $\delta_1=0.21$ and $\delta_2=0.13$. We generate 100 reparametrizations $\gamma$ by sampling $N_k$ uniformly from $\{5,\ldots,50\}$ and sampling $(t_l^k)_{l=1}^{N_k}$ uniformly in $[0,1]$. Ordered ascendingly, they define the starts of the periods. This yields a collection of $500$ signals, that we perturb with noise $(W_t)_{t\in[0,1]}$, a Gaussian process with covariance as in~\eqref{eq:noise_covariance}, with parameters $\sigma\in [0.0001, 6]$ and $\gamma\in [0.01, 0.4]$.

\begin{figure}
	\centering
	\includegraphics[width=0.99\textwidth]{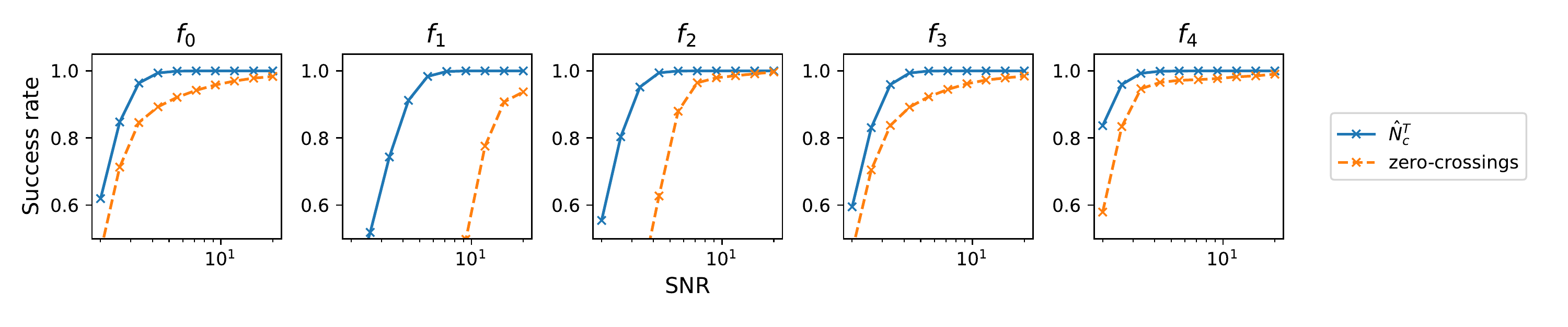}
	\includegraphics[width=0.99\textwidth]{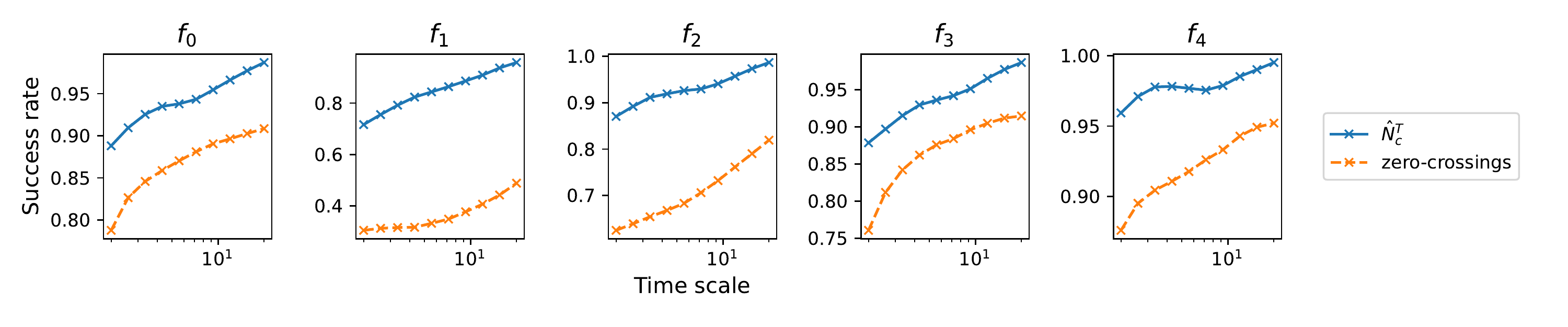}
	\caption{The success rates for estimating $N$, while varying the noise level $\sigma$ (top row) and the time--scale $l$ (bottom row) of the Gaussian process $W$ with covariance~\eqref{eq:noise_covariance}.}
	\label{fig:synthetic_results}
\end{figure}
We compare two estimators. We define an oracle, based on the zero crossings method, by specifying the number of zero crossings in a period of $f_k$ - information that is inaccessible in practice. We benchmark $\hat{N}_c^T$ against this zero crossings oracle, with the fraction of samples for which $N$ was estimated correctly as a metric. We present the results for each template function $f_k$ in Figure~\ref{fig:synthetic_results}. Globally, the performance of both estimators is increasing as the SNR increases ($\sigma$ decreases). The estimator $\hat{N}_c^T$ outperforms the oracle in all scenarios. The sensibility of the zero--crossings method to local extrema close to zero is confirmed by the poor performance on $f_1$: a noise of small amplitude can create additional zero--crossings. Even though $\hat{N}_c^T$ also shows a degraded performance for this periodic function, it outperforms the benchmark and that by the biggest margin. The difference in performance between the two methods is also most noticeable on $f_2$. At high noise levels, the pairs of local and global extrema get identified as the same, as it would be the case for a degenerate function.

\begin{figure}
	\centering
	\includegraphics[width=0.99\textwidth]{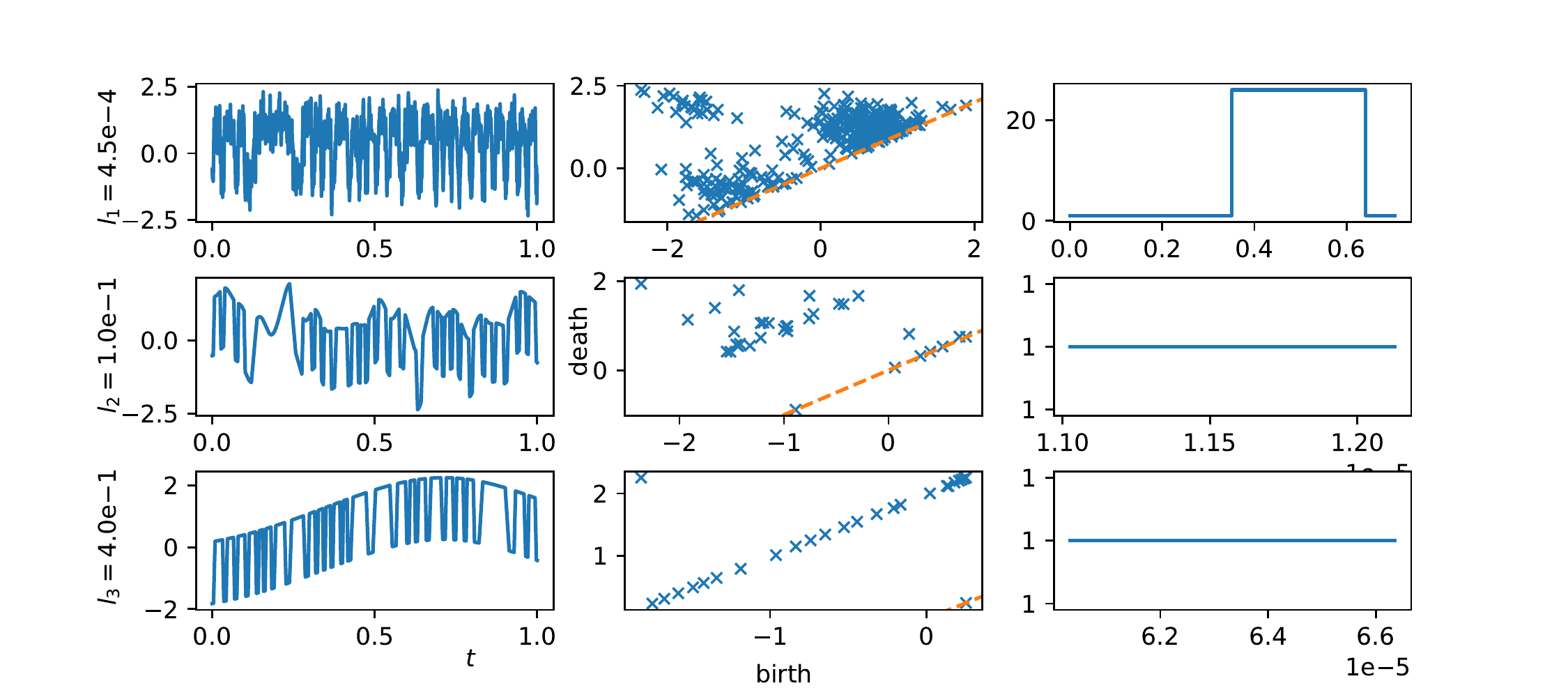}
	\caption{The left column represents realizations of $S=(f_4\circ\gamma)(t) +W_t$, for $W_t$ sampled with different time--scales $l_1,l_2,l_3$. From left to right, each row contains the graph of $S$, the diagram $\mu_S$ and $h_S$. 
	}
	\label{fig:explain_timescale}
\end{figure}
Since $\max\vert W_t\vert$ is decreasing in the time--scale $l$, we expect the success rate to be increasing in $l$.
The success rate of $\hat{N}_c^T$ is increasing, but starts to decrease and reaches a minimum at around $l=0.1$. We believe that the initial increase is indeed due to the decrease in $\Vert W_t\Vert_\infty$. In this range of $l$, the noise is high--frequency and of small amplitude, leading to many points in the diagram, as shown in the top row in Figure~\ref{fig:explain_timescale}. The difficulty in estimating $N$ comes essentially from the presence of many points close to the diagonal. However, the points corresponding to minima of $f$ are generally separable from those spurious points. When the time scale increases to around 0.05 - 0.1, the noise has lower frequencies, almost in the range of $\gamma'$. Looking at the graph on the left, the noise shifts some periods vertically, creating isolated points in the persistence diagram. In the output of a clustering algorithm those points are isolated, instead of being clustered with the rest. When $l$ increases even further, the induced oscillations are generally of smaller amplitude and resemble a slow drift of the period. In the diagram, such a drift manifests itself as a densely--sampled line, which can be identified as a single cluster.

\subsection{Applications to magnetic odometry}
\label{sec:magnetic_odometry}
Problems~\ref{prob:estimate_N} and \ref{prob:odometry} are motivated by an industrial application: magnetic odometry. The magnetic field $S$ measured inside a car and close to one of its wheels exhibits recurrent structure, periodic in the rotation of that wheel. The function $f(x)$ is the value of the magnetic field when the wheel is in a position $x\in\R$ so $\gamma$ encodes the rotations of the wheel over time. Inferring $\gamma$ from $S$ alone would allow one to compliment the inertial and GPS--based position estimates with an independent estimation, resulting in a more robust estimator~\citep{bristeau_techniques_2012}. From an odometric sequence, we define a position estimate $\hat{\gamma}(t) = \sum_{k} 1_{t_k\leq t}$.

\begin{figure}
	\centering
	\includegraphics[width=0.75\textwidth]{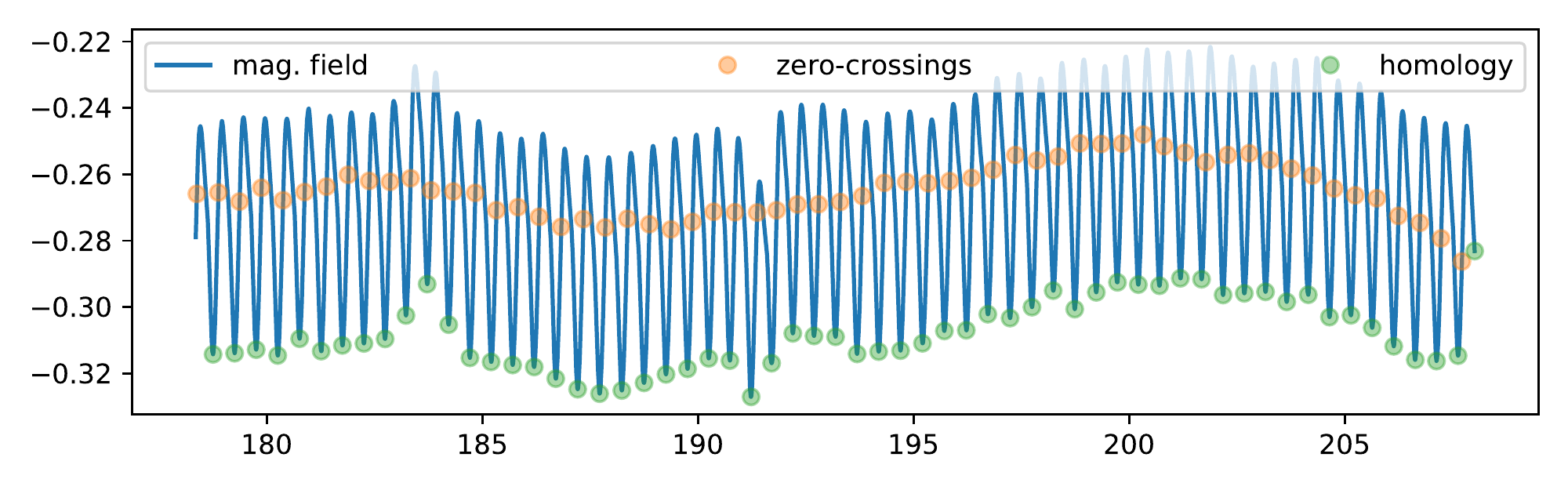}
	\caption{Magnetic field recorded in a vehicle moving in a straight line, at a constant speed.}
	\label{fig:magnetic_odometry}
\end{figure}

The magnetic signal is sampled at 125Hz. As preprocessing, we detrend the signal by removing the median, calculated using a 5--second sliding window (ie, $5\times125$ elements). By applying the zero crossings algorithm, we obtain the reference sequence. The topological method consists of applying the proposed estimator $\hat{N}_c^T$ and retrieving the associated sequence. Figure~\ref{fig:magnetic_odometry} shows the magnetic signal recorded in a vehicle moving in a straight line and the resulting sequences. Both capture the periodicity well and satisfy the odometric property~\eqref{eq:odometry_condition}.

We propose a metric to assess the quality of a solution to the odometric problem. Given a sequence $(t_n)_{n=1}^N$ and a reference $\tilde{\gamma}(t)$, we examine the distribution of $d_n = \tilde{\gamma}(t_n) - \tilde{\gamma}(t_{n-1})$ for $n=2,\ldots,N$. First, we note that for a true odometric sequence, all $d_n$ are equal to the circumference of the wheel, $C=1.94$ meters here. Therefore, we use two criteria. First, we quantify the number of outliers in three different ways: the number of outliers measured by the number of too short segments (TS), the number of too large segments (TL) and the ratio of ``correct" segments (CR)
\begin{equation*}
TS=\vert\{n\mid d_n<0.9C\}\vert,\qquad TL=\vert\{n\mid d_n>1.1C\}\vert,\qquad CR = 1 - \frac{TS+ TL}{N}.
\end{equation*}
Second, we use $R((d_n)_{n=1}^N)$ the mean deviation of $d_n$ from the circumference as the measure of dispersion. We note that the two criteria are complimentary. Indeed, TS and TL quantify whether the number of turns of the wheel is estimated correctly, on subsegments of the signal. The dispersion measures whether the odometric sequence captures turns precisely, or if the locations are only approximately correct. Note that a perfect reference $\gamma$ is not available, so the errors in the displacement estimation $\tilde{\gamma}$ used as a reference will induce some additional variance. 

We use a 16-minute recording of a vehicle moving in traffic on a public road network. The magnetometer is placed close to the right rear wheel (in a favorable position) and the reference displacement estimation is provided by a combination of the GPS and inertial sensors.
\begin{figure}
	\centering
	\includegraphics[width=0.5\textwidth]{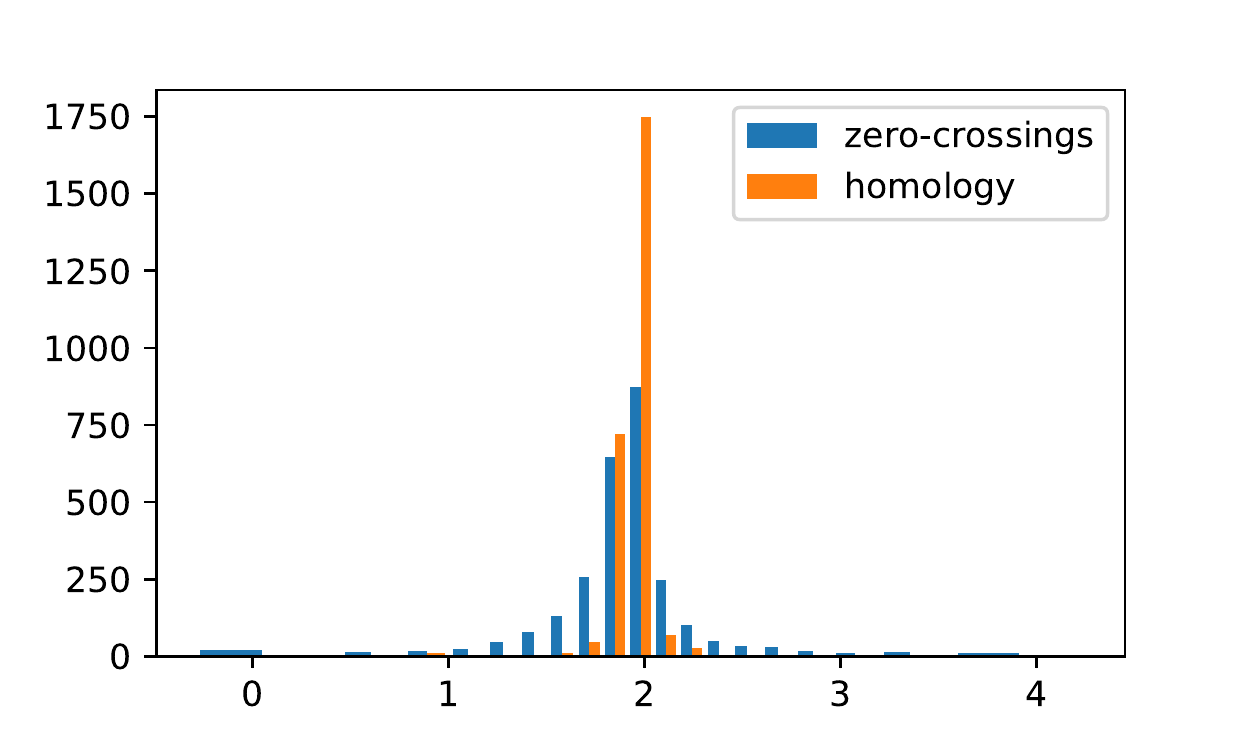}
	\caption{Histogram of the distances $(d_n)_n$, obtained for different odometric sequences.}
	\label{fig:dn_histogram}
\end{figure}

\begin{table}
	\centering
	\begin{tabular}{c|c c}
		& \multicolumn{2}{c}{Method} \\
		& zero--crossings & topological\\
		\hline
		TS& 490 & 46\\
		TL& 331 & 51\\
		CR& 0.69 & 0.96\\
		$R((d_n)_n)$& 0.23 & 0.05\\
	\end{tabular}
	\caption{Qualitative measures to compare the odometric sequences produces by the zero--crossings and the topological method.}
	\label{tbl:real_data_experiments}
\end{table}
After inspection of the first results, we noticed that the zero--crossings sequence contained many erroneous points in the time intervals when the car was at rest and that this was not the case for the topological method. We eliminated those time intervals from the results. Figure~\ref{fig:dn_histogram} shows that the sequence produced by the topological method is more concentrated around $C$, while the one from zero--crossings still contains a few elements near 0. This is confirmed by data in Table~\ref{tbl:real_data_experiments} - the sequence from the topological method leads to fewer elements outside of the region $[0.9C, 1.1C]$ and is on average much closer to the true circumference.

\begin{figure}
	\label{fig:velocities_comparison}
	\centering
	\includegraphics[width=0.58\textwidth]{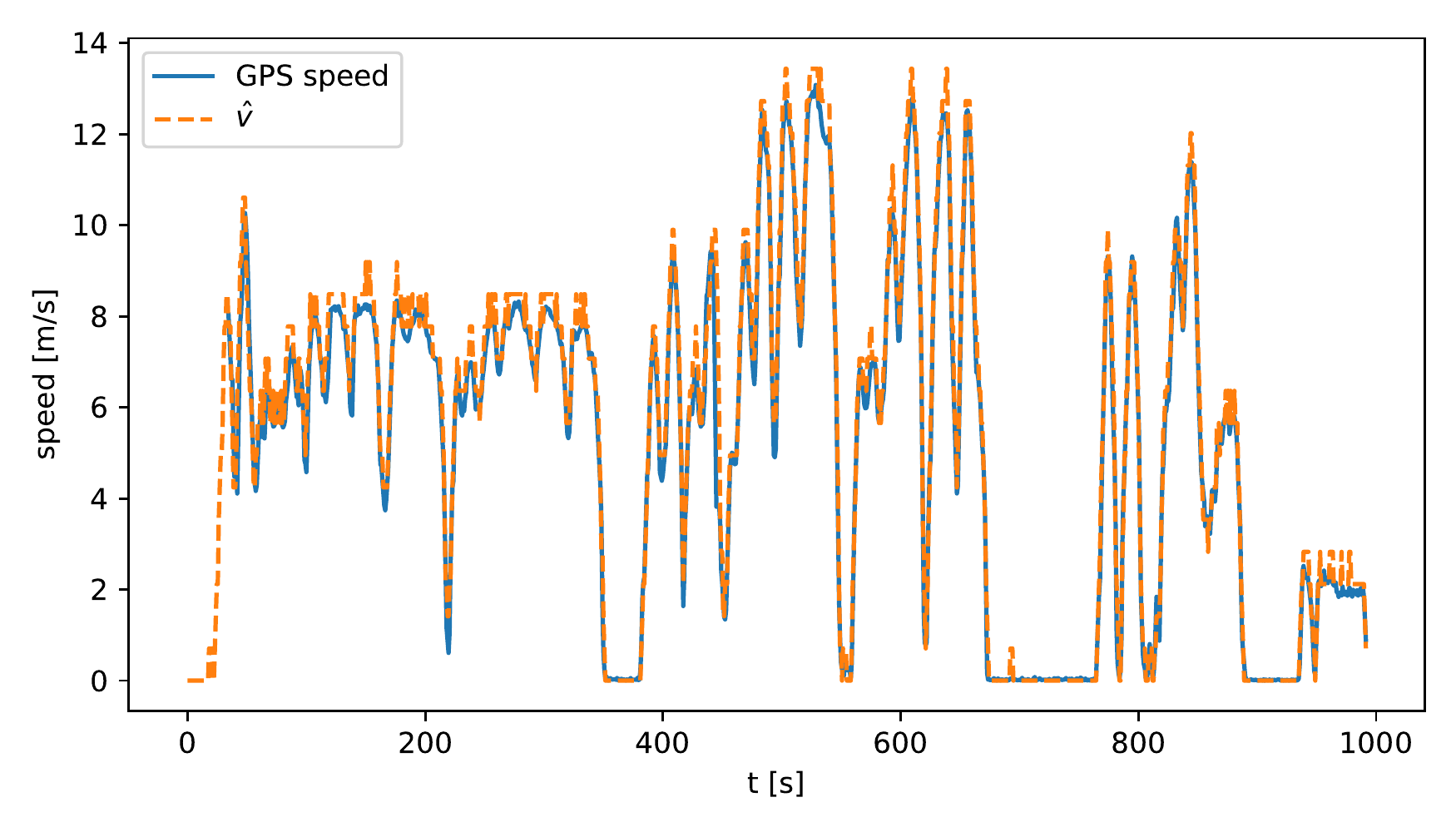}
	\includegraphics[width=0.41\textwidth]{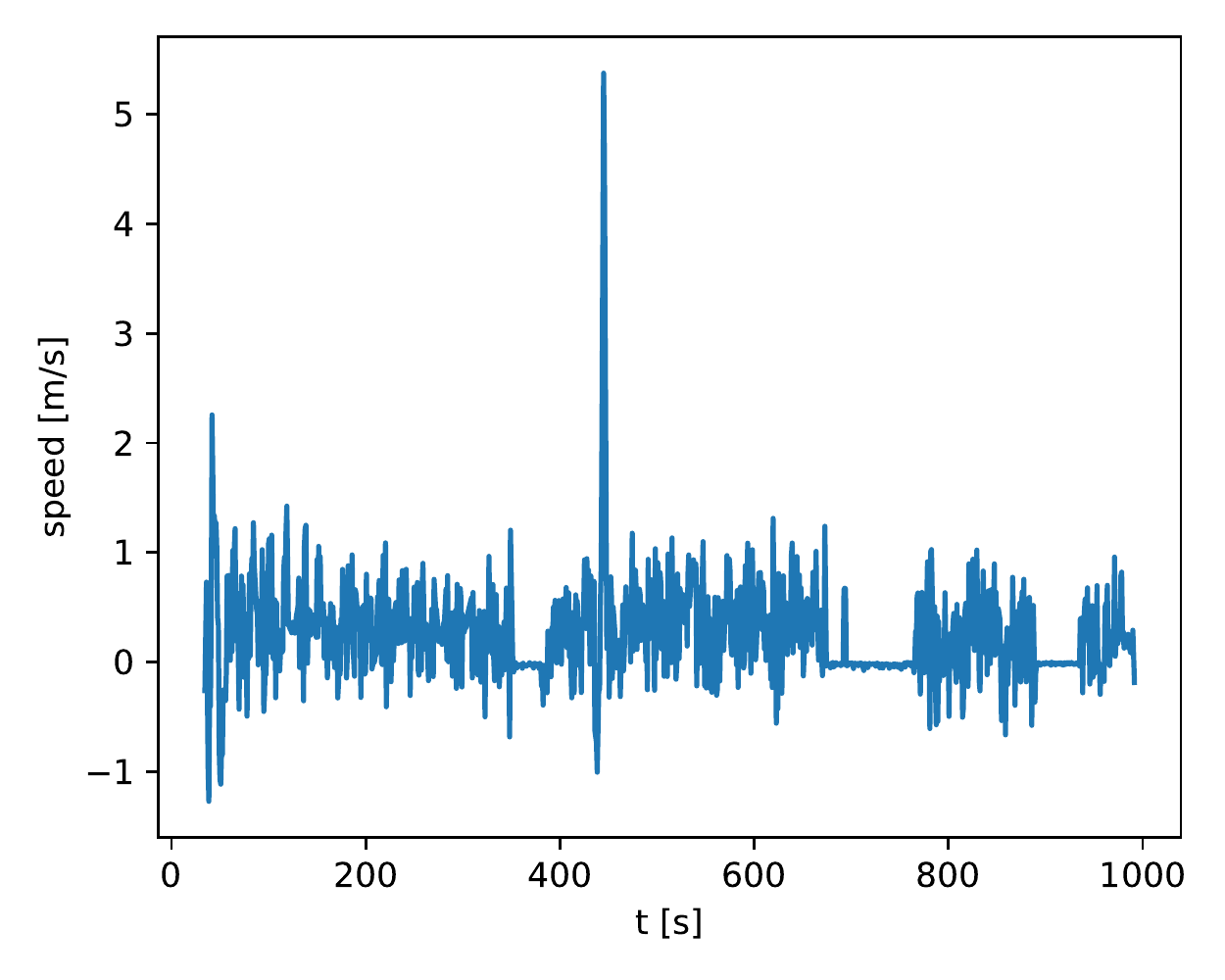}
	\caption{On the left, the speed estimation $\hat{v}$ compared with the GPS reference. On the right, the difference between the two.}
\end{figure}
Based on the odometric sequence $(t_n)_{n=1}^N$, we define an estimator of the displacement
\begin{equation*}
\hat{d}(t) = 1.98\sum_{n=1}^N \indicator_{t_n\leq t},
\end{equation*}
and an estimator of the speed $\hat{v}$, as the average speed over a 2.8--second window, estimated from $\hat{d}$. We picture $\hat{v}$ and the reference GPS speed in Figure~\ref{fig:velocities_comparison}. The difference between the two is comprised mostly between -0.5 and 1.1 m/s. Globally, the dynamics and the stops are correctly detected. The error could be optimized further by carefully selecting the length of the window.

\section{Conclusion and perspectives}
We formulated two inverse problems: estimation of $N$ and finding odometric sequences.
We establish a homogeneity property of persistence that we then use to propose an estimator of $N$.  We prove that the estimation is correct for a wide class of functions, even when the signal is obfuscated with additive noise. We propose a more robust variant $\hat{N}_c$ and a parameter--free version $\hat{N}_c^T$. We show that for certain types of functions that we describe, this estimator perform similarly to a standard approach. Using the estimation of $N$, we propose a method of obtaining odometric sequences. We show that the sequences are stable with respect to perturbations of the input signal. We demonstrate that they can be used for magnetic odometry in practice

In practical scenarios, we never observe an exact number of revolutions, so our hypothesis that $N\in\N$ is a significant limitation of our work.
It is not a noticeable problem in the case of the numerical experiments, because that constraint is satisfied for the synthetic examples, and for magnetic odometry, the signal has a single pair of local extrema per period.

Another limitation is the use of the greatest common divisor in~\eqref{eq:N_hat}. It is not robust to any perturbation (change in a single value in the set) and therefore, the whole procedure is very sensitive to outliers. An alternative would be to use solutions to the approximate common divisor (ACD) problem, like the Simultaneous Diophantine approximation (SDA)~\cite{galbraith_algorithms_2016}. They are shown to be stable with respect to perturbations of the individual elements, but require additional information. For example, before applying SDA in our case, we would need to give a sharp integer lower--bound for $\log_2(N)$. In addition, solutions to the ACD problem increase the computational cost of the estimation, preventing us from systematically investigating the potential gains. Nevertheless, based on a few synthetic examples, we believe that this avenue might dramatically improve our methods' performance in selected applications.

Finally, the presented method is designed for univariate data. While it can be applied to a single covariate from a multi--variate signal, a well--designed approach for several variables could make the estimation more robust. The restriction to univariate data is related to how we construct the persistence diagram and it is not immediately clear how to propose a signature for multi--variate data which shares similar properties as stated in Proposition~\ref{prop:periodic_multiplicity}.

\section*{Acknowledgments}
WR thanks the ANR TopAI chair (ANR--19--CHIA--0001) for financial support.
The authors thank Sysnav providing the experimental equipment and the help with the data acquisition.

\bibliography{references.bib}

\appendix
\section{Proof of Proposition~\ref{prop:periodic_multiplicity}}
\label{app:proof_periodic_multiplicity}

Notice that $\bar{f}_N = \bar{f}\circ\pi$, where $\pi:\Se^1\rightarrow\Se^1$ is the $N$-cover of $\Se^1$ defined by $z\mapsto z^N$. In the proof, we characterize the persistence module $(H_0\left((\bar{f}\circ\pi)^{-1}(\rbrack-\infty, \alpha\rbrack)\right)_\alpha, (\iota_{\alpha, \alpha'})_{\alpha\leq \alpha'})$.

Let $M=\max(\bar{f})$ and $k=\left\vert\ \bar{f}^{-1}(M)\ \right\vert$. Since $\bar{f}$ has a finite number of critical points, we can choose a small constant $\epsilon>0$ such that $\bar{f}^{-1}(\lbrack M-2\epsilon, M\lbrack)$ contains no critical points. The space $\bar{f}^{-1}(\rbrack-\infty, M-2\epsilon\rbrack)$ has $k$ connected components. Because $\pi$ is an $N$-cover, we have that $\vert(\bar{f}\circ\pi)^{-1}(M)\vert = Nk$, so that $(\bar{f}\circ\pi)^{-1}(\rbrack-\infty, M-2\epsilon\rbrack)$ has $Nk$ connected components. Therefore, for any $\alpha< M$,
$$H_0((\bar{f}\circ\pi)^{-1}(\rbrack-\infty, \alpha\rbrack)) = \oplus_{n=1}^N H_0(\bar{f}^{-1}(\rbrack-\infty, \alpha\rbrack)),$$
and the morphisms $(\iota_{\alpha, \alpha'})_{\alpha\leq \alpha'<M}$ are induced by direct sums. In particular, at the level of persistence diagrams, it implies that $D(\bar{f}\circ\pi)\cap A = \sqcup_{n=1}^N (D(\bar{f})\cap A)$, for any measurable set $A\subset \Omega_1 = \rbrack-\infty, M\lbrack\times \rbrack-\infty, M\lbrack$.

The circle $\Se^1 = (\bar{f}\circ\pi)^{-1}(\rbrack-\infty, M\rbrack)$ has a single connected, so that $\dim(\iota_{M-\epsilon}^{M})=Nk-1$. In addition, since the death of the essential component is set to $M$, at the level of persistence diagrams, we have $D(\bar{f}\circ\pi)(\Omega_2) = (Nk-1) + 1 = NK$, for $\Omega_2=\lbrack-\infty, M\lbrack\times [M, M]$ and $D(\bar{f}),\,D(\bar{f}\circ\pi)\subset \Omega_1\cup\Omega_2$. Since the local minimal values of $\bar{f}$ and $\bar{f}\circ\pi$ are the same,
\begin{align*}
D(\bar{f}\circ\pi)\cap A &= D(\bar{f}\circ\pi) \cap (A\cap\Omega_1) + D(\bar{f}\circ\pi) \cap (A\cap\Omega_2)\\
&= \sqcup_{n=1}^N D(\bar{f})\cap (A\cap\Omega_1) + \sqcup_{n=1}^N D(\bar{f})\cap(A\cap\Omega_2)\\
&= ND(\bar{f})\cap A.
\end{align*}
The same can be expressed in terms of rectangle measures~\cite{chazal_structure_2016}.

\section{Proof of Proposition~\ref{prop:assumption_is_necessary}}
Before showing Proposition~\ref{prop:assumption_is_necessary}, we introduce and prove two lemma. Lemma~\ref{lem:diagram_is_realizable} shows that a persistence diagram is realizable as the diagram of a sublevel set filtration of a space.

\begin{lemma}
	\label{lem:diagram_of_a_fct}
	Let $f:X\rightarrow \R$ be a continuous function on a compact domain $X$. Then,
	\begin{enumerate}
		\item $(\min(f), \max(f))\in D(f)$, and \label{lem:diagram_of_a_fct_1}
		\item $\supp(D(f))\subset \{(x,y)\mid \min(f)\leq x,\, y\leq \max(f)\}$. \label{lem:diagram_of_a_fct_2}
	\end{enumerate}
\end{lemma}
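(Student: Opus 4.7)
The plan is to read both claims directly off the structure of the persistence algorithm described in Algorithm~\ref{alg:persistence}, combined with the fact that $f$ attains its extrema on the compact set $X$.

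For claim~\ref{lem:diagram_of_a_fct_2}, I would first observe that every element of $\supp(D(f))$ is obtained by executing one iteration of the main loop of Algorithm~\ref{alg:persistence}. Whenever a new point is pushed, its first coordinate is $f(c)$ for some local minimum $c\in X$, and whenever a second coordinate is finalized it is either the value $f(c')$ at some local maximum $c'\in X$ or the terminal assignment $\max(f)$ made in the last line of the algorithm. Since $X$ is compact and $f$ continuous, $\min(f)\leq f(c)$ for every local minimum and $f(c')\leq \max(f)$ for every local maximum, yielding the required containment.

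For claim~\ref{lem:diagram_of_a_fct_1}, the argument is to show that the essential class of the sublevel set filtration is born at the global minimum. Let $c_*\in X$ achieve $f(c_*)=\min(f)$; this is automatically a local minimum, so the algorithm pushes $(\min(f),\ast)$ to $D$ with key $c_*$. Since $c_*$ is the first local minimum encountered in the ascending sweep over critical values, no merging step can ever update its second coordinate: any later local maximum $c'$ connects a component born at $c_*$ with a component born at some other local minimum $c\neq c_*$ with $f(c)\geq f(c_*)$, and the algorithm assigns the death $f(c')$ to the component with the larger birth value, not to the one rooted at $c_*$. Consequently $(\min(f),\ast)$ survives with its second coordinate still unset when the loop terminates, so it is exactly the key $\argmax(s)$ whose second coordinate is then set to $\max(f)$ by the last line of the algorithm.

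The only subtlety I expect is making the previous paragraph fully rigorous in the presence of possible ties among extremal values, which would force an arbitrary ordering choice among local minima attaining $\min(f)$. I would handle this either by remarking that the argument applies to whichever $c_*\in f^{-1}(\min(f))$ is picked first in the ordering (since all such choices yield the same birth value $\min(f)$), or, more cleanly, by invoking the stability of the diagram under $C^0$-perturbations to reduce to the case of distinct critical values and then passing to the limit. Once that case is handled, both statements follow, and the conclusion $(\min(f),\max(f))\in D(f)$ and the containment in the stated rectangle are immediate.
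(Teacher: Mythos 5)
Your proposal is correct and takes essentially the same approach as the paper: both claims are read directly off Algorithm~\ref{alg:persistence}, with part~\ref{lem:diagram_of_a_fct_2} following because births (resp.\ deaths) are local minimum (resp.\ maximum) values, and part~\ref{lem:diagram_of_a_fct_1} following because the elder rule never kills the component rooted at a global minimizer during the loop, so its death is set to $\max(f)$ in the final line. Your treatment of ties matches the paper's as well --- whichever of the tied points survives still has birth $\min(f)$ --- so no further work is needed.
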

\begin{proof}
	We prove both statements by analyzing Algorithm~\ref{alg:persistence}.
	
	\emph{\ref{lem:diagram_of_a_fct_2}.} The first (resp. second) coordinates of points in the persistence diagram are local minimum (resp. maximum) values, so that they are greater (resp. smaller) than $\min(f)$ (resp. $\max(f)$).
	
	\emph{\ref{lem:diagram_of_a_fct_1}.} We claim that, when the \emph{for} loop terminates, there is a point $x\in D$ which satisfies two properties:
	\begin{itemize}
		\item the first coordinate of $x$ is $\min(f)$,
		\item the second coordinate is not set.
	\end{itemize}
	In the next line, we set the second coordinate of $x$ to $\max(f)$. To finish the proof, we show that such $x$ exists.
	Let $c$ be a global minimum of $s$. When we enter the \emph{for} loop, we add to $D$ a point $p$ with $\min(f)$ as the first coordinate. consider the iterations, let $v$ be a local maximum such that $c_1=c$ or $c_2=c$. Suppose, w.l.o.g that $c_1=c$. Then, $s(c)\leq f(c_2)$. If $f(c)<f(c_2)$, then $i=2$ and the second coordinate of $p$ is not set. If $\min(f) = f(c)=f(c_2)$, the second coordinate of one of the corresponding points is set to $f(v)$: it does not matter which as the other will satisfy both required conditions.
\end{proof}

Lemma~\ref{lem:diagram_is_realizable} is similar to \cite[Proposition 5.8]{lesnick_theory_2015}, but we include the proof for two reasons: first, to be self--contained and, second, to provide a simple construction specific to our case, where we characterize the period of $f$.
\begin{lemma}
	\label{lem:diagram_is_realizable}
	Let $D$ be a non--empty persistence diagram with a unique, most persistent point $p_0=(b_0, d_0)$. Assume that $D\setminus \Delta$ is finite and $D\subset [b_0,d_0]\times [b_0, d_0]$. Then, there exists a continuous function $f:\R\rightarrow \R$ with period $1$, such that $D=D(\restr{[0,1]})$, $f(0) = f(1)$.
\end{lemma}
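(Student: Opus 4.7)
The plan is to build $f$ inductively, starting from a tent function realizing only the essential pair and adding one non-essential point at a time via a small local zigzag modification. In the base case I would take the 1-periodic extension of $f_0(t) = d_0 - 2(d_0 - b_0)\min(t, 1 - t)$ on $[0,1]$: it is continuous, piecewise linear, satisfies $f_0(0) = f_0(1) = d_0$, has its unique local minimum of value $b_0$ at $t = 1/2$, and by direct application of Algorithm~\ref{alg:persistence} together with Lemma~\ref{lem:diagram_of_a_fct} its diagram on $[0,1]$ is exactly $\{(b_0, d_0)\}$.

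For the inductive step, enumerate the non-essential off-diagonal points of $D$ as $(b_1, d_1), \dots, (b_n, d_n)$ in ascending order of death value $d_k$, breaking ties arbitrarily. I would place them sequentially on the right leg $[1/2, 1]$, where $f_0$ is a linear bijection onto $[b_0, d_0]$: for each $k$ choose a small sub-interval $[\alpha_k, \alpha_k + \eta_k]$ that lies strictly to the right of all previously chosen sub-intervals and satisfies $f_0([\alpha_k, \alpha_k + \eta_k]) \subset (b_k, d_k)$ strictly. Such a sub-interval always exists because the preimage $f_0^{-1}((b_k, d_k))$ is a non-degenerate open interval on $[1/2, 1]$ whose right endpoint exceeds the preimage for any smaller death value, and the $\eta_j$'s can be chosen arbitrarily small. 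On each such sub-interval I would replace $f_0$ by the piecewise linear zigzag with vertices $(\alpha_k, v_-^{(k)})$, $(\alpha_k + \eta_k/3, d_k)$, $(\alpha_k + 2\eta_k/3, b_k)$, $(\alpha_k + \eta_k, v_+^{(k)})$, where $v_-^{(k)} = f_0(\alpha_k)$ and $v_+^{(k)} = f_0(\alpha_k + \eta_k)$. The resulting $f$ is continuous, piecewise linear, 1-periodic (since the modifications do not touch neighborhoods of $0$ and $1$), and has the required local extrema.

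The bulk of the proof is to verify, inductively on $k$, that after the $k$-th zigzag is inserted the diagram of $f^{(k)}\vert_{[0,1]}$ gains exactly the pair $(b_k, d_k)$ and nothing else. Tracking the sublevel set filtration of $f^{(k)}$, the key observations are: (i) the modification introduces exactly one new local maximum of value $d_k$ and one new local minimum of value $b_k$; (ii) for $r \in [b_k, d_k)$ the connected component of the sublevel set containing the new local minimum stays isolated from the rest of the filtration, because the strict inclusion $f_0([\alpha_k, \alpha_k + \eta_k]) \subset (b_k, d_k)$ prevents the pocket from escaping laterally through the boundary of the insertion interval while the new local maximum $d_k$ blocks it internally; (iii) at $r = d_k$ the pocket finally merges across the new local maximum with the component that connects back through $[1/2, \alpha_k]$, which by that time has already absorbed every pocket with index $j < k$ since those have been paired at their respective deaths $d_j \leq d_k$, so that this upstream component has birth $b_0 \leq b_k$; by the elder rule the younger birth $b_k$ dies at $d_k$, producing exactly the pair $(b_k, d_k)$ and preserving all previous pairings.

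The main obstacle is the spatial–temporal interplay in step (iii): if insertions were placed in an order that did not respect the ascending $d_k$-ordering, the extended pocket of a more persistent insertion could reach a less persistent insertion further from $b_0$ before the main component did, and the elder rule would then pair births with incorrect deaths across insertions. The ascending-$d_k$ ordering together with the leftward placement of smaller-$d$ insertions ensures that at every merger event the upstream component (toward $t = 1/2$) has already absorbed all previously inserted minima with smaller death values, so each new pocket is absorbed by a component containing the oldest-birth minimum $b_0$. Tie handling for equal $d_k$'s and pairwise disjointness of the sub-intervals under the open-inclusion constraint $f_0([\alpha_k, \alpha_k + \eta_k]) \subset (b_k, d_k)$ is a routine shrinking argument on the $\eta_k$.
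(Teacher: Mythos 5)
Your construction is correct, but it takes a genuinely different route from the paper's. The paper builds the function in one shot: it enumerates the off--diagonal points in lexicographic order of increasing birth (and decreasing death), lays the values out as a single alternating sequence $d_0, b_0, d_1, b_1, \dots, d_{K-1}, b_{K-1}, d_0$ over a subdivision of $[0,1]$, linearly interpolates, and checks directly with Algorithm~\ref{alg:persistence} that every local minimum is killed by the maximum immediately to its left. You instead start from a tent realizing the essential pair and insert nested pockets into the rising leg, ordered by ascending death, and verify inductively via the elder rule that each insertion contributes exactly its pair; the ascending--death ordering combined with left--to--right placement is precisely the invariant that forces every merge to happen against the component of $b_0$ at the intended level, and your simultaneity argument correctly disposes of ties among the $d_k$. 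Two small remarks. First, the justification in your step (ii) is slightly off: the pocket \emph{does} escape laterally through the right end of the insertion interval as soon as the level exceeds $v_+^{(k)} < d_k$; what actually saves the argument is that, by the ordering and placement, everything to the right of the insertion is a monotone rise toward a peak of height at least $d_k$ (or toward $f(1)=d_0$), so below level $d_k$ there is nothing on that side to merge with — the conclusion of (ii) is true, but for this reason rather than the one you give. Second, the paper takes care that the constructed function has period exactly $1$ (modifying the subdivision when the extremal sequence happens to be periodic); in your construction this is automatic, since half of each period is a bare monotone descent while the insertions sit in the other half, but it deserves a sentence. Overall, the paper's layout is more compact, while your induction makes the bookkeeping explicit (exactly one diagram point added per insertion, earlier pairings untouched).
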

\begin{proof}
	We first construct a candidate function $f$ as shown in Figure~\ref{fig:f_K_construction} and then, prove that it satisfies the desired properties.
	Let us enumerate the points in $D\setminus \Delta$, in lexicographic order of increasing birth and decreasing death values: $(b_0,d_0),\ldots (b_{K-1}, d_{K-1})$, where $K=\vert D\setminus \Delta\vert$. Then, consider the function
	\begin{equation*}
	\begin{array}{cccc}
	f_K:&\{0,\ldots,2K\}&\rightarrow&\R \\
	& k&\mapsto&
	\begin{cases*}
	d_{k/2}, \text{ if $k$ is pair}\\
	b_{(k-1)/2}, \text{otherwise,}
	\end{cases*}
	\end{array}
	\end{equation*}
	with the convention $d_K=d_0$, and the linear interpolation $f:[0,1]\rightarrow\R$.
	\begin{figure}
		\centering
		\includegraphics[width=0.99\textwidth]{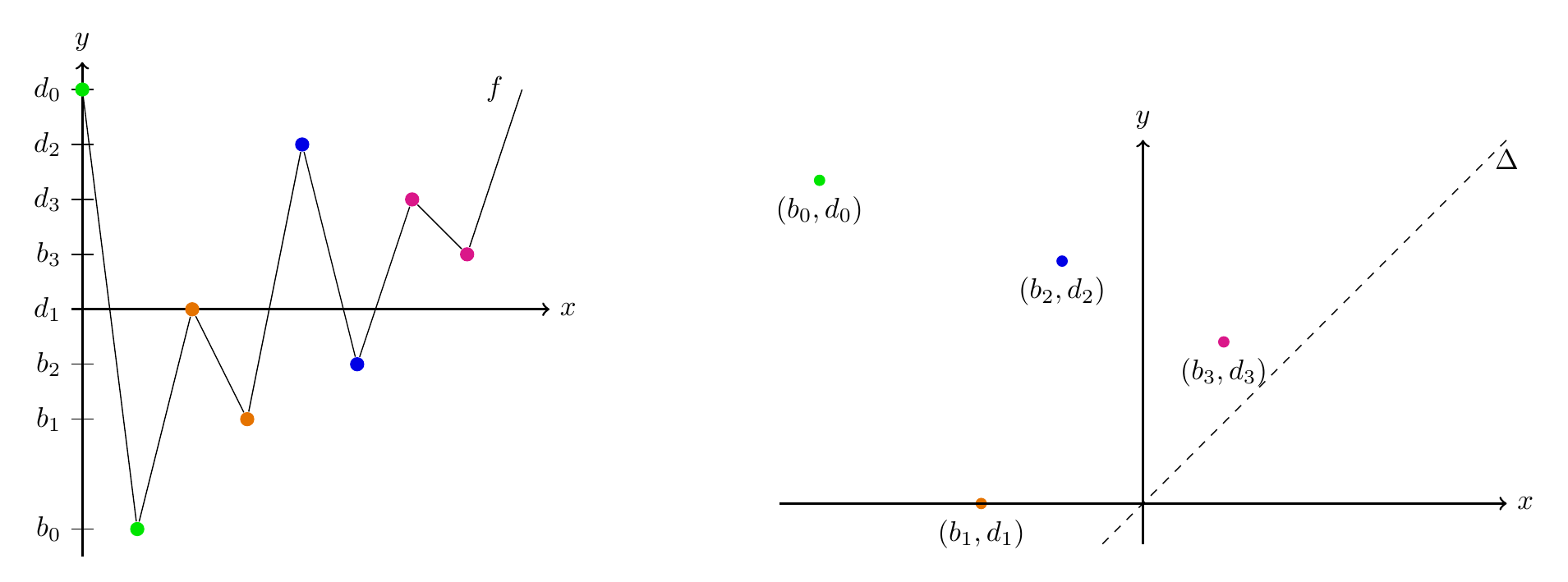}
		\caption{Illustration of the function $f$.}
		\label{fig:f_K_construction}
	\end{figure}
	
	First, note that $f$ is continuous and that $f(0)=f_K(0)=d_0=d_K=f_K(2K)=f(1)$.
	The sequence of extrema is aperiodic if either $K=0$ or $(b_0, d_0)\neq (b_K, d_K)$. An aperiodic sequence leads to $f_K$ aperiodic and so $f$ can be extended to $\R$ to have period 1. If $(b_0, d_0) = (b_K, d_K)$, modify the subdivision of $[0,1]$ to be non--uniform, in which case the period of $f$ is also 1.
	
	We argue that $D=D(f)$. It can be seen by noticing that every local minimum is paired with the local maximum on its left. Formally, we calculate $D(f)$ with Algorithm~\ref{alg:persistence}. The local minima and maxima of $f$ are $(x_{b_k})_{k=1}^K,\, (x_{d_k})_{k=1}^K$ respectively. We only need to consider, for every $d_k$,
	what point it is the second coordinate of. The connected component $I$ of $x_{d_k}$ in $f^{-1}(\rbrack -\infty, d_k\rbrack)$ is partitioned into two non--empty sets $I_- = \lbrack 0, x_{d_k}\rbrack \cap I$ and $I_+ = \lbrack x_{d_k}, 1\rbrack\cap I$. Denote $M\coloneqq \max(f)$ and suppose that $d_k<M$. Then, since the values of the local minima are increasing by definition of $f$, $\min \restr{f}{I_-}<\min\restr{f}{I_+}$ and $x_{b_k}=\argmin \restr{f}{I_+}$, so $d_k$ becomes the second coordinate of $b_k$. Therefore, there are exactly $N$ copies of each point $(b_k,d_k)$, such that $d_k<M$. If $d_k=M$, the second coordinate of a point whose second coordinate has not been set yet, is set to $d_k$.
\end{proof}

\begin{proof}[Proof of Proposition~\ref{prop:assumption_is_necessary}]
	Let $\mu = \frac{1}{N(\restr{f}{[0,1]})}\mu_{\restr{f}{[0,1]}}$. The measure $\mu$ is a persistence diagram: by definition, it is a finite sum of weighted point measures. The weights are integral, because $N(\restr{f}{[0,1]})$ divides the multiplicity of every point.
	
	Thanks to Lemma~\ref{lem:diagram_of_a_fct}, we can apply Lemma~\ref{lem:diagram_is_realizable}to $\mu_{\restr{f}{[0,1]}}$, which gives us $g:\R\rightarrow\R$ such that $\mu_{\restr{g}{[0,1]}} = \mu = \frac{1}{N(\restr{f}{[0,1]})}\mu_{\restr{f}{[0,1]}}$.
\end{proof}

\section{$\hat{N}(S)$ for a signal with white noise}
\label{app:sampled_signal}
In Section~\ref{sec:method_multiplicity_points}, we discuss the estimation of $\gamma(1)$ in the presence of noise. The aim of this section is to derive guarantees in the case where the noise entries are independent. Such noise is called "white noise" and it is often considered in applications. Continuous equivalents of white noise processes exist, but they are difficult to analyze and many objects, like persistence diagrams, are not defined for such processes. Since this consideration is motivated by practical considerations, we choose to derive guarantees on a sample from the signal, sampled at $\freq>0$. For example, in magnetic odometry motivating this work, $\freq= 125Hz$.

Let $M = \lfloor\freq\rfloor$. We define the sampling operator $L:C^0([0,1])\rightarrow \R^{M}$ and the piecewise linear interpolation operator $F:\R^{M+1}\rightarrow C^0([0,\tfrac{M}{\freq}])$. Specifically, for $h\in C^0[0,1]$ and $a\in\R^{M+1}$,
\begin{align*}
L(h) &\coloneqq (h(\tfrac{m}{\freq}))_{m=0}^M,\qquad\\
F(a)(t)&\coloneqq a_{m}(\freq t - (m-1)) + a_{m-1}(m - \freq t),\ \text{ for any } t\in \lbrack\tfrac{m-1}{\freq}, \tfrac{m}{\freq}\lbrack.
\end{align*}
We denote the composition of the two by $T=F\circ L: \mathcal{C}^3([0,1]) \rightarrow \mathcal{C}^0([0,\tfrac{M}{\freq}])$.
We examine the approximation obtained by interpolating a uniform sample from $f\circ\gamma$.
\begin{lemma}
	\label{lemma:sampling_stability}
	Suppose that $f$ and $\gamma$ are $\mathcal{C}^3$. Then, for $t\in\left\lbrack 0,\frac{M}{\freq}\right\lbrack$,
	$$\vert T(f\circ\gamma)(t) - (f\circ\gamma)(t)\vert \leq \tfrac{2}{\freq^2}(\Vert f''\Vert_\infty \Vert \gamma'\Vert_\infty^2 + \Vert f'\Vert_\infty \Vert \gamma''\Vert_\infty) + \mathcal{O}(\freq^{-3}).$$
\end{lemma}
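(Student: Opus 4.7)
The plan is to expand $T(f\circ\gamma)(t) - (f\circ\gamma)(t)$ pointwise using a second-order Taylor expansion around $t$, then apply the chain rule to estimate the second derivative of $f\circ\gamma$. Write $h = f\circ\gamma$ and fix $t\in\lbrack\tfrac{m-1}{\freq}, \tfrac{m}{\freq}\lbrack$. Set $u = \freq t - (m-1) \in [0,1)$, $a = u/\freq$, $b = (1-u)/\freq$, so that $t - \tfrac{m-1}{\freq} = a$ and $\tfrac{m}{\freq} - t = b$; the definition of $F$ then gives
\begin{equation*}
T(h)(t) \;=\; u\, h\!\left(\tfrac{m}{\freq}\right) + (1-u)\, h\!\left(\tfrac{m-1}{\freq}\right).
\end{equation*}

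Next, I would Taylor expand $h(t - a)$ and $h(t + b)$ to second order. The $\mathcal{C}^3$ assumption on $f$ and $\gamma$ ensures that $h\in \mathcal{C}^3$ with $\Vert h'''\Vert_\infty<\infty$, which yields a uniform $\mathcal{O}(\freq^{-3})$ bound on the Lagrange remainders since $\max(a,b)\leq 1/\freq$. Substituting into the previous display,
\begin{align*}
T(h)(t) - h(t) &= \bigl(-a(1-u) + bu\bigr)\, h'(t) + \tfrac{1}{2}\bigl((1-u)a^2 + u b^2\bigr)\, h''(t) + \mathcal{O}(\freq^{-3})\\
&= \frac{u(1-u)}{2\freq^2}\, h''(t) + \mathcal{O}(\freq^{-3}),
\end{align*}
where the first-order coefficient vanishes because $a(1-u) = u(1-u)/\freq = bu$, and the second-order coefficient simplifies via $(1-u)a^2 + u b^2 = u(1-u)/\freq^2$.

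It remains to control $\Vert h''\Vert_\infty$ by means of the chain rule applied to $h=f\circ\gamma$:
\begin{equation*}
(f\circ\gamma)'' = (f''\circ\gamma)\,(\gamma')^2 + (f'\circ\gamma)\,\gamma'',
\end{equation*}
so that $\Vert h''\Vert_\infty \leq \Vert f''\Vert_\infty \Vert \gamma'\Vert_\infty^2 + \Vert f'\Vert_\infty \Vert \gamma''\Vert_\infty$. Inserting this bound and using $u(1-u)\leq 1/4$ (which is loose enough to absorb the stated factor of $2/\freq^2$) finishes the proof. There is no real obstacle: the only subtlety is to carry the Taylor expansion one order beyond what might look necessary, so that the first-order terms cancel and the remainder decays like $\freq^{-2}$ rather than $\freq^{-1}$, and to invoke the $\mathcal{C}^3$ hypothesis to make the residual bound uniform in $t$ and $m$.
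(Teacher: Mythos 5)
Your proof is correct and follows essentially the same route as the paper: a second-order Taylor expansion with Lagrange remainder (using the $\mathcal{C}^3$ hypothesis for the uniform $\mathcal{O}(\freq^{-3})$ term) followed by the chain rule to bound $\Vert(f\circ\gamma)''\Vert_\infty$. The only difference is that you expand around $t$ rather than around the grid endpoints, which makes the first-order cancellation exact and yields the sharper constant $u(1-u)/2\leq 1/8$ in place of the paper's $2$, so your bound implies the stated one.
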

\begin{proof}
	First, we show that
	$\vert T(h) - h \vert = \mathcal{O}\left(\tfrac{\Vert h''\Vert_\infty}{\freq^2}\right).$ For $m$ such that $t\in\left\lbrack \tfrac{m-1}{\freq},\tfrac{m}{\freq}\right\lbrack$,
	\begin{align*}
	\vert T(h)(t) - h(t)\vert
	=& \left\vert (h\left(\tfrac{m}{\freq}\right) - h(t))(\freq t - (m-1)) + \left(h\left(\tfrac{m-1}{\freq}\right) - h(t)\right)(m - \freq t)\right\vert\\
	=& \vert (h'(\tfrac{m}{\freq})(t-\tfrac{m}{\freq}) + \tfrac{1}{2}h''(t_m^*)\left(t-\tfrac{m}{\freq}\right)^2)(\freq t - (m-1)) \\
	&+ (h'\left(\tfrac{m-1}{\freq}\right)\left(t-\tfrac{m-1}{\freq}\right) + \tfrac{1}{2}h''(t_{m-1}^*)\left(t-\tfrac{m-1}{\freq}\right)^2)(\freq t - (m-1))\vert,
	\end{align*}
	where $t_{k-1}^*\in [\tfrac{m-1}{\freq}, t],\ t_k^*\in [t,\tfrac{m}{\freq}]$ are given by the Taylor--Lagrange expansion of $h$. By an expansion of $h'$ around $\tfrac{m-1}{\freq}$, 
	\begin{align*}
	\vert T(h)(t) - h(t)\vert \leq& \tfrac{1}{\freq^2}\left(\left\vert h''\left(\tfrac{m-1}{\freq}\right)\right\vert + \vert h''(t_1)\vert + \tfrac{1}{2\freq}h'''(t_2)\right) \\
	\leq& \tfrac{1}{\freq^2}\left(2\Vert h''\Vert_\infty + \mathcal{O}\left(\tfrac{1}{\freq}\right)\right).
	\end{align*}
	We conclude by applying the above to $h=f\circ\gamma$.
\end{proof}
The following proposition is an application of the stability result to $[0,\tfrac{M}{\freq}]$.
\begin{proposition}
	\label{prop:sampled_diagram_distance}
	Let $f,\,\gamma$ be $\mathcal{C}^3$ as above. In addition, suppose that $\freq$ is large enough so that $[0,\tfrac{M}{\freq}]$ contains all the local extrema of $f\circ\gamma$. Then,
	$$ \bottleneck(D(T(f\circ\gamma)), D(f\circ\gamma)) \leq \tfrac{2}{\freq^2}(\Vert f''\Vert_\infty \Vert \gamma'\Vert_\infty^2 + \Vert f'\Vert_\infty \Vert \gamma''\Vert_\infty) + \mathcal{O}(\freq^{-3}).$$
\end{proposition}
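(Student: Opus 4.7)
The plan is to combine the classical bottleneck stability inequality~\eqref{eq:persistence_stability} with the pointwise bound provided by Lemma~\ref{lemma:sampling_stability}. The two functions $T(f\circ\gamma)$ and $f\circ\gamma$ are close in sup norm on $[0, M/\freq]$ by Lemma~\ref{lemma:sampling_stability}, so as long as we can compare their persistence diagrams on a common domain, the stability theorem immediately yields the desired bound with the same constants.

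The main subtlety is that $T(f\circ\gamma)$ is only defined on $[0, M/\freq]\subset [0,1]$, whereas $f\circ\gamma$ is defined on all of $[0,1]$. To bridge this gap, I would first argue that $D(f\circ\gamma)$ (computed on $[0,1]$) equals $D\bigl(\restr{f\circ\gamma}{[0, M/\freq]}\bigr)$. This is exactly where the hypothesis that $[0, M/\freq]$ contains all the local extrema of $f\circ\gamma$ is needed: removing the extremum-free tail $\rbrack M/\freq, 1\rbrack$ from the domain does not create, destroy, or re-pair any local extremum, and it does not change the connected component structure of sublevel sets in any essential way (the global connectedness and the ordering of critical values are preserved). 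Hence Algorithm~\ref{alg:persistence} produces the same diagram on both domains.

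Having fixed the common domain $[0, M/\freq]$, applying~\eqref{eq:persistence_stability} gives
\begin{equation*}
\bottleneck\bigl(D(T(f\circ\gamma)),\ D(f\circ\gamma)\bigr)\ \leq\ \Vert T(f\circ\gamma) - f\circ\gamma\Vert_{\infty,\,[0,M/\freq]},
\end{equation*}
and Lemma~\ref{lemma:sampling_stability} bounds the right-hand side by $\tfrac{2}{\freq^2}(\Vert f''\Vert_\infty \Vert\gamma'\Vert_\infty^2 + \Vert f'\Vert_\infty \Vert\gamma''\Vert_\infty) + \mathcal{O}(\freq^{-3})$, which is exactly the claimed inequality.

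The only real obstacle is the domain-matching argument; the rest is a one-line invocation of two already-established results. A careful writer would spell out why the restriction does not alter the diagram (e.g., by noting that the sublevel sets of the restriction deformation-retract onto those of the full function near the boundary, since $f\circ\gamma$ is monotone on $\rbrack M/\freq,\,1\rbrack$ by the hypothesis), but no new estimates are required.
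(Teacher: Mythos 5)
Your proposal is correct and follows essentially the same route as the paper: the paper's proof likewise notes that $D\bigl(\restr{f\circ\gamma}{[0,M/\freq]}\bigr)=D(f\circ\gamma)$ because all local extrema lie in $[0,M/\freq]$, and then applies the stability inequality~\eqref{eq:persistence_stability} on that common domain together with Lemma~\ref{lemma:sampling_stability}. Your additional remarks on why the restriction does not alter the diagram only make explicit what the paper states in one line.
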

\begin{proof}
	Since all extrema of $f\circ\gamma$ are included in $I = [0,\tfrac{M}{\freq}]$, we have equality between the diagrams $D\left(\restr{f\circ\gamma}{I}\right) = D(f\circ\gamma)$. Using the stability of the persistence diagram, on the filtrations induced by $T_\freq(f\circ\gamma)$ and $f\circ\gamma$ on I, we conclude with the Lemma~\ref{lemma:sampling_stability}.
\end{proof}

Proposition~\ref{prop:sampled_diagram_distance} relates the diagram of the noise--less signal to the sampled version. Let us now consider a noisy sample from $p\circ\gamma$. More precisely,
$$(S_m)_{m=1}^M = (f(\gamma(\tfrac{m-1}{\freq})) + W_m)_{m=1}^M,$$
where $W_m$ are independent, identically distributed centered Gaussian random variables with standard deviation $\sigma$.
\begin{proposition}
	\label{prop:iid_gaussian_N_stability}
	Let $w$ and $0<\tau<\delta/3$ be such that $\alpha \coloneqq \tau/2 - \tfrac{1}{\freq^2}C_{f,\gamma}>0$, where
	\begin{align*}
		C_{f,\gamma}=&\Vert f''\Vert_\infty \Vert \gamma'\Vert_\infty^2 + \Vert f'\Vert_\infty \Vert \gamma''\Vert_\infty\\
		&+ \tfrac{1}{2}(\Vert f'''\Vert_\infty\Vert \gamma'\Vert_\infty^3 + 3\Vert f''\Vert_\infty \Vert \gamma'\Vert_\infty \Vert \gamma''\Vert_\infty + \Vert f'\Vert_\infty \Vert \gamma'''\Vert_\infty).
	\end{align*}
	Then,
	$$P(\hat{N}(F((S_m)_{m=1}^M))=N, \tau) \geq (1-\phi(\tfrac{\alpha}{\sigma}))^\freq.$$
\end{proposition}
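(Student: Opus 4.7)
The plan is to combine the deterministic stability result of Proposition~\ref{prop:N_stability} with a sup-norm bound that separates the interpolation error from the stochastic noise, and then use independence of the $W_m$ to obtain the probability bound. Concretely, Proposition~\ref{prop:N_stability} guarantees $\hat{N}(S',\tau)=N$ as soon as $\Vert S'-f\circ\gamma\Vert_\infty\le \tau/2$, because $\tau<\delta/3$ is already assumed and $\tau/2$ plays the role of the perturbation $\epsilon$. Therefore it suffices to lower bound the probability that $F((S_m)_{m=1}^M)$ is within $\tau/2$ of $f\circ\gamma$ in sup norm.

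First I would use the triangle inequality and the linearity of $L$ and $F$:
\begin{equation*}
\Vert F((S_m)) - f\circ\gamma\Vert_\infty \;\le\; \Vert F((W_m))\Vert_\infty \;+\; \Vert T(f\circ\gamma)-f\circ\gamma\Vert_\infty .
\end{equation*}
For the second, deterministic term, I would refine Lemma~\ref{lemma:sampling_stability} by keeping the cubic Taylor remainder explicit, so that the bound reads $\Vert T(h)-h\Vert_\infty\le \Vert h''\Vert_\infty/\freq^2 + \Vert h'''\Vert_\infty/(2\freq^3)$ for $h=f\circ\gamma$. Applying the chain rule to expand $h''$ and $h'''$ in terms of derivatives of $f$ and $\gamma$ one recognizes exactly the two bracketed groups that make up $C_{f,\gamma}$, giving $\Vert T(f\circ\gamma)-f\circ\gamma\Vert_\infty\le C_{f,\gamma}/\freq^2$. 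For the first term, $F((W_m))$ is piecewise linear with breakpoints at the sampling locations, so $\Vert F((W_m))\Vert_\infty = \max_{1\le m\le M} |W_m|$.

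Combining these observations, the event $\{\max_m |W_m|\le \alpha\}$ implies $\Vert F((S_m))-f\circ\gamma\Vert_\infty\le \alpha + C_{f,\gamma}/\freq^2 = \tau/2$, and hence, by Proposition~\ref{prop:N_stability}, also the event $\{\hat{N}(F((S_m)),\tau)=N\}$. By independence of the $W_m$ and since $W_m/\sigma$ is standard Gaussian,
\begin{equation*}
P(\max_m |W_m|\le \alpha) \;=\; \prod_{m=1}^{M} P(|W_m|\le \alpha) \;\ge\; \big(1-\phi(\alpha/\sigma)\big)^{\freq},
\end{equation*}
where the last inequality uses the elementary bound $P(|W_1|\le \alpha) = \phi(\alpha/\sigma)-\phi(-\alpha/\sigma) \ge 1-\phi(\alpha/\sigma)$ valid for $\alpha>0$ (since $\phi$ is increasing and bounded by $1$), and $M=\lfloor\freq\rfloor$. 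Chaining the three inclusions yields the claimed lower bound.

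The main obstacle is the interpolation step: Lemma~\ref{lemma:sampling_stability} is stated with an implicit $\mathcal{O}(\freq^{-3})$ term, and to arrive at a bound of the form $C_{f,\gamma}/\freq^2$ without any asymptotic residue, one must track the Taylor--Lagrange remainders exactly, use the chain-rule expansion of $(f\circ\gamma)''$ and $(f\circ\gamma)'''$, and absorb the third-order contribution by replacing one factor of $1/\freq$ by a bound independent of $\freq$ (legitimate since we are in a fixed, large enough sampling regime). Everything else is essentially bookkeeping: triangle inequality, an explicit identification of $\Vert F((W_m))\Vert_\infty$ as a componentwise maximum (because linear interpolation attains its extrema at nodes), and the tensorization of independent probabilities.
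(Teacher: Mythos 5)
Your proposal follows the paper's proof step for step: the same triangle--inequality decomposition of $\Vert F((S_m))-f\circ\gamma\Vert_\infty$ into a deterministic interpolation error plus a noise term, the same appeal to Lemma~\ref{lemma:sampling_stability} to bound the former by $C_{f,\gamma}/\freq^2$, the same reduction of the event $\{\hat N(F((S_m)),\tau)=N\}$ to $\{\max_m\vert W_m\vert\le\alpha\}$ via Proposition~\ref{prop:N_stability} with $\epsilon=\tau/2$, and the same tensorization over the independent $W_m$. Your observations that a piecewise--linear interpolant attains its sup at the nodes (so $\Vert F((W_m))\Vert_\infty=\max_m\vert W_m\vert$) and that the $\mathcal{O}(\freq^{-3})$ remainder of the lemma must be tracked explicitly to recover the third--derivative block of $C_{f,\gamma}$ make explicit two details the paper leaves implicit.

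The one step that does not hold as written is the final elementary inequality. For $W_1\sim\mathcal{N}(0,\sigma^2)$ one has $P(\vert W_1\vert\le\alpha)=2\phi(\alpha/\sigma)-1$, and your claim $2\phi(u)-1\ge 1-\phi(u)$ is equivalent to $\phi(u)\ge 2/3$, so it fails whenever $\alpha/\sigma<\phi^{-1}(2/3)\approx 0.43$; your chain of inequalities therefore does not yield $(1-\phi(\alpha/\sigma))^{\freq}$ in the small--$\alpha/\sigma$ regime. This is not really your error: the paper's own proof is garbled at exactly the same point (it asserts $P(W_m\le\alpha)=(1-\phi(\alpha/\sigma))^{\freq}$, which is neither a single--variable probability nor the two--sided one), and the stated bound is almost certainly a typo for $(1-\phi(-\alpha/\sigma))^{\freq}=\phi(\alpha/\sigma)^{\freq}$ (one--sided) or, for the two--sided event actually needed, $(1-2\phi(-\alpha/\sigma))^{M}$. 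If you replace your last inequality by the exact identity $P(\max_m\vert W_m\vert\le\alpha)=(1-2\phi(-\alpha/\sigma))^{M}$, your argument is complete and proves the corrected statement.
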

\begin{proof}
	We show that the linear approximation of the exact signal is good enough, so that the entries can be perturbed with noise. Indeed, $\Vert F((S_m)_{m=1}^M) - f\circ\gamma\Vert_\infty \leq \Vert (W_m)\Vert_\infty + \Vert T(f\circ\gamma) - f\circ\gamma\Vert_\infty.$ By Lemma~\ref{lemma:sampling_stability},
	$\Vert F((S_m)_{m=1}^M) - f\circ\gamma\Vert_\infty \leq \Vert (W_m)\Vert_\infty + \tfrac{1}{\freq^2}C_{f,\gamma}.$ Since $W_m$ are independent, centered normal variables with variance $\sigma^2$, $P(W_m\leq \alpha) = (1-\phi(\tfrac{\alpha}{\sigma}))^\freq$. For $\alpha = \tau/2 - \tfrac{1}{\freq^2}C_{f,\gamma}$, $\Vert F((S_m)_{m=1}^M) - f\circ\gamma\Vert_\infty \leq \alpha$ with probability $(1-\phi(\tfrac{\alpha}{\sigma}))^\freq$. By Proposition~\ref{prop:N_stability},
	$P(\hat{N}(F((S_m)_{m=1}^M))=N, \tau)<(1-\phi(\tfrac{\alpha}{\sigma}))^\freq$.
\end{proof}

The bound in Proposition~\ref{prop:iid_gaussian_N_stability} is decreasing in $\freq$, as soon as the approximation is good enough: $\alpha>0$. There is a trade--off between the probability of $\hat{N}$ being correct and the stability of the precision of the odometric sequence, which increases with $w$.
\section{Proof of Lemma~\ref{lemma:min_max_min}}
\label{app:min_max_min_lemma}
In the proof, we propose to use the explicit construction of the persistence diagram of the sublevel sets of a function. One such construction is described in Algorithm~\ref{alg:persistence}.
The image $A = f([x_1, x_2])$ is an interval. Suppose that $f(x_1)< f(x_2)$. We will prove that there exists a local maximum $c\in [x_1,x_2]$, such that $f(x_2) + 2\delta\leq f(c)$.
Since $x_2$ is a local minimum, by Algorithm~\ref{alg:persistence}, $(f(x_2),d)\in D(f)$, for some $d\in\R$. Let $x_d$ be the local maximum which set the second coordinate of $(f(x_2), d)$ and $I$ be the connected component of $x_d$ in $f^{-1}(\rbrack -\infty, d\rbrack)$. Note that by the separation property, $d\geq f(x_2)+2\delta$.
Now, we examine two cases. First, let us suppose $x_d<x_2$. Then, $x_2=\argmin_{x>x_d, x\in I}f(x)$ by line 8 in Algorithm~\ref{alg:persistence} and $f(x_1)< f(x_2)$ by line 9 imply that $x_1<x_d<x_2$. We set $c=x_d$.
If $x_d>x_2$, then, $x_2=\argmin_{x>x_d, x\in I}f(x)$. Recall that $I$ is the connected component containing $x_d$ in $f^{-1}(\rbrack-\infty, f(x_d)\rbrack)$. By assumption $f(x_1)<f(x_2)$, we have that $x_1\notin I$, so there is $c\in[x_1,x_2]$ such that $c\notin f^{-1}(\rbrack -\infty,f(x_d)\rbrack)$ and therefore, $f(c)\geq f(x_d)\geq f(x_2)+2\delta$. 
The case $f(x_1)>f(x_2)$ is treated in the same way. Note that $f(x_1)\neq f(x_2)$, since we use the total lexicographic order $(f(x_1), x_1), (f(x_2), x_2)$.

\end{document}